\DeclareMathAlphabet{\mathcalligra}{T1}{calligra}{m}{n}
\newcommand{\E}{\mathbb{E}}
\newcommand{\triplenorm}[1]{{\vert\kern-0.25ex\vert\kern-0.25ex\vert #1 
    \vert\kern-0.25ex\vert\kern-0.25ex\vert}}
\newtheorem{thm}{Theorem}[section]
\newtheorem{prop}{Proposition}[section]
\newtheorem{definition}{Definition}[section]
\newtheorem{asmp}{Assumption}
\newtheorem{example}{Example}[section]
\newtheorem{remark}{Remark}
\theoremstyle{definition}
\theoremstyle{remark}
\newcommand{\mathz}{\ooalign{$z$\cr\hfil\rule[.5ex]{.2em}{.06ex}\hfil\cr}}
\newcommand{\xdownarrow}[1]{%
  {\left\downarrow\vbox to #1{}\right.\kern-\nulldelimiterspace}
}
\date{}
\begin{document}

\title{Relative Arbitrage Opportunities with Interactions among $N$ investors \footnote{The authors are very thankful to the two anonymous referees and the Editors for their careful reviews and suggestions.}}
\author{Tomoyuki Ichiba\thanks{Department of Statistics and Applied Probability, South Hall, University of California, Santa Barbara, CA 93106, USA (E-mail: \href{mailto:ichiba@pstat.ucsb.edu}{ichiba@pstat.ucsb.edu}) Part of research is supported by National Science Foundation grants DMS-1615229 and DMS-2008427.} \and Nicole Tianjiao Yang\thanks{Department of Mathematics, University of Tennessee, Knoxville, TN 37996, USA (E-mail: \href{mailto:tyang23@utk.edu}{tyang23@utk.edu})}}
\maketitle
\begin{abstract}
The relative arbitrage portfolio outperforms a benchmark portfolio over a given time-horizon with probability one. With market price of risk processes depending on the market portfolio and investors, this paper analyzes the multi-agent optimization of relative arbitrage opportunities in the coupled system of market and wealth dynamics. We construct a well-posed market dynamical system of McKean-Vlasov type under an empirical measure of investors, where each investor seeks for relative arbitrage with respect to a benchmark dependent on market and all the agents. We show the conditions to guaranty relative arbitrage opportunities among competitive investors through the Fichera drift. 
Under mild conditions, we derive the optimal strategies for investors and the unique Nash equilibrium that depends on the smallest nonnegative solution of a Cauchy problem.
\end{abstract}

\section{Introduction}
\label{intro}
\numberwithin{equation}{section}
Market participants usually compare the performance of an investment strategy with a benchmark index. Among different metrics and tools for capturing opportunities that outperform a benchmark portfolio, relative arbitrage established in Stochastic Portfolio Theory (SPT), see Fernholz \cite{spt}, is of special interest to investment and portfolio management. However, market dynamics are constantly influenced by large investing entities where complicated interactions occur among them. We need a market model that captures these behaviors and develops a multi-agent optimization framework. To better describe and analyze the market based on SPT, this paper investigates the following questions: How do we capture the competitive behaviors of participants in the financial market? With additional information on these investors, how do we improve the market model and make portfolio suggestions? We aim to develop the optimization scheme for portfolio managers or asset management entities in a realistic market environment. This scheme would provide the information structure (for example, feedback from capitalization processes, wealth processes, or agent's preference profile, etc.) that is required for effective portfolio strategy and the corresponding optimal investments. 

The relative arbitrage problem first defined in SPT considers generating a strategy that outperforms a benchmark portfolio almost surely at the end of a certain time span and looks for the highest relative return. It is shown in \cite{diversity} that relative arbitrage can exist in equity markets that resemble actual markets and that relative arbitrage results from market diversity, a condition that prevents the concentration of all market capital into a single stock. Specific examples of the market, including the stabilized volatility model, in which a relative arbitrage opportunity exists, are introduced in \cite{ra2005}. To relax the assumptions about the behavior of the market imposed in the SPT, \cite{strong} considers relative arbitrage in regulated markets where dividends and the merging and splitting of companies are taken into account. 
Our model arises from the pioneering work of Fernholz and Karatzas \cite{opta}, which characterizes the best possible relative arbitrage with respect to the market portfolio, and derives non-anticipative investment strategies of the best arbitrage in a Markovian setting. The best arbitrage opportunity is further analyzed in \cite{uncertain} with  Knightian uncertainty. The smallest proportion of the initial market capitalization is described as the min-max value of a zero-sum stochastic game between the investor and the market. Further investigation of the exploitation of relative arbitrage opportunities has been carried out in \cite{visb, volar,ruf2, hedgear}. Assuming the market is diverse and sufficiently volatile, the functionally generated portfolios introduced in SPT are a tool to construct portfolios with favored return characteristics. 
The optimization problem from the functionally generated portfolio point of view is handled in \cite{tkwfgp}. The papers \cite{wong0} and \cite{wong} connect relative arbitrage with information theory and optimal transport problems. The robust optimization perspective is studied in \cite{itkin2022ergodic, itkin2021open, itkin2022robust} in terms of the asymptotic growth rate and model uncertainty.


{To address a gap in SPT regarding interactions among a large number of agents, this work considers the market dynamics influenced by the collective behavior of the agents. We develop a multi-agent optimization framework for relative arbitrage opportunities through stochastic differential games, in which investors seek optimal strategies to outperform the market index and their peers. Specifically, an investor $\ell$ achieves relative arbitrage if his/her logarithmic terminal wealth matches or exceeds the logarithmic terminal benchmark $\mathcal{V}(T)$ by a personalized preference level $c_{\ell}$ given at time 0. The benchmark is defined as a weighted average of market capitalization and average wealth.} 

%

\smallskip 


{The first question raised in this paper is: {\it Among noncooperative agents, under what conditions do relative arbitrage opportunities exist, and how can such opportunities be optimized?} 
We provide in Proposition~\ref{prop: clprop} a necessary condition under which every investor can achieve a relative arbitrage opportunity. Intuitively, this condition suggests that relative arbitrage is easier to attain when investors place greater weight on outperforming the market than on competing with their peers. Since the market also includes participants outside the group of investors considered here, competition against the market portfolio may imply potential arbitrage opportunities.  For example, the presence of large ``noise traders" can create such opportunities; see \cite{de1990noise} for a study of the market equilibrium with noise traders.
By contrast, if each investor seeks only to outperform the average wealth of the group, then no arbitrage opportunity arises. Another key determinant of the existence of relative arbitrage opportunities is the average preference, $\frac{1}{N}\sum_{\ell=1}^N c_\ell$, which reflects the overall intensity of competition to outperform the benchmark.
}

{
For investor $\ell$, optimal arbitrage is defined as the minimum initial wealth required to achieve relative arbitrage over the time horizon $[0,T]$. We study a group of sophisticated, competitive investors who seek optimal arbitrage with a common scheme specified later in the paper, whereas the drift and diffusion coefficients of the market dynamics are themselves affected by the actions of this group. In particular, for every $\ell$, we focus on the minimum ratio $u^\ell(T)$ (see Definition~\ref{uTdef}) such that the investor matches or exceeds $\mathcal{V}(T)$ at the terminal time when starting from the initial wealth $u^\ell(T)\mathcal{V}(0)$. Proposition~\ref{uniqpf} provides a sufficient condition under which investors \emph{strictly} outperform the benchmark. Our analysis is based on a PDE characterization of the optimal arbitrage quantity.
}

{The next question arises : {\it Is it possible for every investor to achieve optimal arbitrage in the market $\mathcal M$? If so, what are the corresponding strategies?} To analyze the interaction and competition among investors, we characterize the optimization problem as a differential game. The optimal strategies sought by investors in pursuit of relative arbitrage opportunities are captured by the Nash equilibrium. We then introduce a suitable notion of uniqueness for the Nash equilibrium. We derive the optimal strategy profile as a fixed-point problem over the path space of strategies, and we show that the Nash equilibrium exists.
In particular, Theorem~\ref{thm: neforxy} identifies a family of optimal strategies, inspired by functionally generated portfolios, that achieve Nash equilibrium. We show the uniqueness of the Nash equilibrium in Theorem~\ref{thm:unique}, where the time horizon and the scale of interactions must satisfy certain conditions. The strategy that achieves Nash equilibrium is in closed-loop feedback form, reflecting the dependence of the market structure on trading volumes. 
}

To conclude, this paper develops a multi-agent framework for relative arbitrage. The setup of \cite{opta} is recovered as a special case when
the number of investors $N =1$ and there is no interaction between the investor and the market.
{The results presented here motivate several directions for future research on stochastic portfolio theory via mean-field games, cooperative investors, short-term arbitrage opportunities. Beyond problems arising from stochastic portfolio theory, the characterization of interactions between agents and their environment through system dynamics, objective functionals, and feedback strategies is technically challenging but has wide-ranging applications in economics, engineering, and biology. We present a solution under such regime to decouple different types of interactions and search for Nash equilibrium. The framework developed here provides one approach to disentangling different forms of interaction and characterizing the associated Nash equilibrium. It also suggests further directions, including mean-field control, games under model uncertainty, and reinforcement learning. A detailed discussion of these topics is provided in Section~\ref{sec: future}.}

\vspace{3pt}

\textbf{Organization of this Paper.} This paper is structured as follows. Section \ref{fds} introduces the market with $N$ investors as a well-posed interacting particle system. Section \ref{chapu} discusses the relative arbitrage problem and the market price of risk processes in a multi-investor formulation. The conditions for obtaining relative arbitrage opportunities are explained in detail. In Section \ref{sec: nplayergame}, we set up $N$-player games among investors where the optimization of relative arbitrage is determined by Nash equilibrium. We discuss the existence and uniqueness of the Nash equilibrium and provide an example inspired by the volatility-stabilized market model. Finally, we include additional proofs and theoretical supports of the model in Appendices~\ref{defs}-\ref{racp}.

\section{The Market Model}
\label{fds}
\numberwithin{equation}{section}
We consider an equity market and focus on its aggregate behavior and a group of investors in this market. The group of investors under consideration is large enough to affect the market as a whole. However, this group of interest is only part of the entire market. The rest of the market participants do not influence the market and can be viewed as exogenous.

\subsection{Capitalizations, wealth and portfolios}

For a given finite time horizon $[0,T]$, an admissible market model $\mathcal{M}$ we use in this paper consists of a given $n$ dimensional standard Brownian motion $W(\cdot) := (W_1(\cdot), \ldots, W_n(\cdot))^{\prime}$ on the probability space $(\Omega, \mathcal{F}, \mathbb{P})$. 
The filtration $\mathbb{F}$ represents the ``flow of information'' in the market driven by the Brownian motion, that is, 
$\mathbb{F} = \{\mathcal{F}^W(t)\}_{0 \leq t < \infty}$
and $\mathcal{F}^W(t) := \{\sigma(W(s)) ; 0 < s < t\}_{0 \leq t < \infty} $ with $\mathcal{F}^W(0) := \{\emptyset, \Omega\}$, mod $\mathbb{P}$. All local martingales and supermartingales are with respect to the filtration $\mathbb{F}$ if not written specifically.

There are $n$ risky assets (stocks) with capitalizations $\mathcal{X}(\cdot) = (X_1(\cdot), \ldots, X_n(\cdot))^{\prime}$ driven by {$W(\cdot)$}. We define a factor $\mathcal Y (\cdot) := (\mathcal Y_1(\cdot), \ldots , \mathcal Y_n(\cdot))^\prime$, whose components represent the aggregate feedback effect of investors on the respective asset capitalizations. We will specify this term shortly. Here, $^\prime$ stands for the transpose of matrices. We assume $\beta(\cdot)$, $\sigma(\cdot)$, $\gamma(\cdot)$ and $\tau(\cdot)$ are defined on $\mathbb{R}_+^n \times \mathbb{R}_+^n$, are {time homogeneous} and the process $(\mathcal X(t), \mathcal Y(t)), t \ge 0  $ in Definition \ref{portfoliopi} is Markovian. That is, it follows the system of stochastic differential equations below: for $t \in [0,T]$, 
\begin{equation}
\label{eq: x}
dX_i(t) = X_i(t)(\beta_i(\mathcal{X}(t), \mathcal{Y}(t)) dt + \sum_{k=1}^n \sigma_{ik}(\mathcal{X}(t), \mathcal{Y}(t)) dW_k(t),\ \  i= 1, \ldots, n,
\end{equation}
with initial condition $X_i(0) = x_i$. The coefficients $\beta$ and $\sigma$ depend on the capitalization $\mathcal X$ and the factor $\mathcal Y$. Factor models have long been considered in the Capital Asset Pricing Models to assess the risk and return of the market. For example, in the Fama-French three factor model, the factors are market excess return, outperformance of small companies against big companies, and outperformance of high market-to-book ratio companies against low market-to-book ratio companies.    

Motivated by the continuous-time adjustment of equity capitalizations through supply and demand, we consider the average capital invested as a factor in the capitalization processes. Each investor $\ell$ invests the proportion $\pi_i^{\ell}(t)$ of current wealth $V^{\ell}(t)$ in the $i$th stock at each time $t$. We define $\mathcal Y(t) := (\mathcal Y_{1}(t), \ldots, \mathcal Y_{n}(t) )^\prime $, $t \ge 0 $. In the paper, 
 the factor process is chosen to represent interactions among investors. In particular, we consider the average trading volume 
\begin{equation}
\label{eq:defy}
    \mathcal Y_i(t) := \frac{1}{N} \sum_{\ell=1}^N V^{\ell}(t)  \pi_i^{\ell}(t), \quad \mathcal Y_i(0) := y_{0,i}, 
\end{equation}
where for each $i = 1,\ldots, N$, $\pi_i^{\ell}(\cdot)$ is fixed in Sections~\ref{fds}-\ref{chapu}, and satisfy the following Definition~\ref{portfoliopi} without concern for the interactions among investors. 
The average trading volume $\mathcal Y_i(t)$ invested by $N$ players in stock $i$ is assumed to follow an It\^o diffusion process 
\begin{equation} 
\label{eq: y}
\begin{aligned} 
   \mathcal Y_i(t)  &= y_{0,i} + \int^t_0 \gamma_i(\mathcal{X}(r), \mathcal{Y}(r)) d r + \int^t_0 \sum_{k=1}^n \tau_{ik}(\mathcal{X}(r), \mathcal{Y}(r)) d W_k(r), \ t \in (0,T] \text{ for an arbitrary } T > 0,\\
\end{aligned}
\end{equation}
and the initial value $\mathcal Y_i(0) := y_{0, i}$ 
for $i = 1, \ldots , n$. {In reality this is an interacting particle system constructed in Appendix C, and the coefficients $(\gamma, \tau)(\cdot)$ depends on the strategy, formulating a feedback loop. Indeed, in later sections, we address the market model to depend on the actions of investors, where there exist measurable functions $\phi^{\ell}$ such that $\pi_i^{\ell}(t) := \phi_i^{\ell}(\mathcal X(t), \mathcal Y(t))$ involves a fixed point problem in Section~\ref{sec: nplayergame}, for $i= 1, \ldots, n$, $\ell = 1, \ldots, N$. Then as $d \mathcal Y_i(t) = \frac{1}{N} \sum_{\ell=1}^N d \big[ V^{\ell}(t) \phi^{\ell}_i(\mathcal{X}(t), \mathcal{Y}(t)) \big]$, $i = 1, \ldots , n$, the coefficients $\gamma(\cdot)$ and $\tau(\cdot)$ are determined by It\^o's formula on $\phi^{\ell}(\cdot)$. Details can be found in Appendix~\ref{apdx:fp}. However, motivated by the mechanism of Mean Field Games and the equilibrium being the solution of fixed point problems, in Section 2-3, we discuss the static problem first, where the proportion $\pi$ is fixed and the average trading volume $\mathcal Y$ is considered to be a given process as in \eqref{eq: y}.
} 
 
 In this paper, we assume that $\text{dim}(W(t)) = \text{dim}(\mathcal{X}(t)) = n$, that is, we have exactly as many sources of randomness as there are stocks on the market $\mathcal{M}$. The dimension $n$ is chosen to be large enough to avoid unnecessary dependencies between the stocks we define. Here, $\beta(\cdot) = (\beta_1(\cdot), \ldots, \beta_n(\cdot))' : \mathbb{R}_+^n \times \mathbb{R}_+^n \rightarrow \mathbb{R}^n$ as the mean rates of return for $n$ stocks and $\sigma(\cdot) = (\sigma_{ik}(\cdot))_{n \times n} : \mathbb{R}_+^n \times \mathbb{R}_+^n \rightarrow \text{GL}(n)$, where $\text{GL}(n)$ is the space of $n \times n$ invertible real matrices. For simplicity, denote $\omega_t := (\mathcal{X}(t), \mathcal{Y}(t))$. $|\cdot|$ denotes the Euclidean norm for vector $\mathbb{R}^d$ and the Frobenius norm of matrix $\mathbb{R}^{d\times n}$, and in this paper $d =n$ or $N$. To satisfy the integrability condition, we assume that for any $T > 0$,
\begin{equation}
\label{xcond}
 \sum_{i=1}^n \int_0^T \bigg (|\beta_i(\omega_t)|+ \alpha_{ii}(\omega_t) \bigg)dt < \infty, 
\end{equation}
where $\alpha(\cdot) := \sigma(\cdot)\sigma'(\cdot)$, and its $i,j$ element $\alpha_{i,j}(\cdot)$ is the covariance process between the logarithms of $X_i$ and $X_{j}$ for $1\le i,j \le n$. The market $\mathcal{M}$ is hence a complete market. We assume $\gamma(\cdot)$ and $\tau(\cdot)$ satisfy that for any $T > 0$,
\begin{equation}
\label{tvcn}
    \sum_{i=1}^n \int_0^T \bigg (|\gamma_i(\omega_t)|+ \psi_{ii}(\omega_t) \bigg)dt < \infty,
\end{equation}
where $\psi(\cdot) := \tau(\cdot)\tau'(\cdot)$.

In this model, there are $N$ \textit{small} investors in the sense that each individual of these $N$ investors has very little influence on the overall system. 

\begin{definition}[Investment strategy]
\label{portfoliopi}
\begin{enumerate}[label={(\arabic*)}]
     \item 
An $\mathbb{F}$-progressively measurable $n$-dimensional process $\pi$ 
is called an {admissible} investment strategy { if $\pi$ is of Markovian feedback form, i.e., there exist measurable functions $\phi_t : \mathbb{R}^n \times \mathbb{R}^n \to \mathbb{R}^n$ such that  $\pi_t = \phi_t(\mathcal{X}_t, \mathcal{Y}_t)$, and if it satisfies}
\begin{equation}
\label{admpi}
\int_0^T (| (\pi(t))^{\prime}\beta(\omega_t)| + (\pi(t))^{\prime}\alpha(\omega_t)\pi(t) ) dt < \infty, \quad  \,  T\in (0,\infty),\,\, \text{a.e. } 
\end{equation}

The strategy here is a self-financing portfolio, since wealth at any time is obtained by trading the initial wealth according to the strategy $\pi(\cdot)$.
We denote the admissible set of the investment strategy process of an investor by $\mathbb{A}$. In the remainder of the paper, we only consider optimizing the strategy processes in the admissible set $\mathbb{A}$.

\item  
An investor uses the proportion $\pi_i(\cdot)$ of current wealth $V^{\ell}(\cdot)$ to invest in the stock $i$. The proportion $ \pi_0 = 1-\sum_{i=1}^n \pi_i(\cdot)$ is on the money market.  A special case of the admissible strategy is when 
 $\pi(\cdot) = (\pi_1(\cdot) , \ldots, \pi_n(\cdot) )'$
is a portfolio, i.e., it takes values in the set
$$\Delta_n := \{\pi = (\pi_1,...,\pi_n)\in \mathbb{R}^n \,| \pi_1 + \ldots +\pi_n = 1\}.$$
Denote $N$ investors' portfolios as $\Delta^N$, the Cartesian product of $\Delta_n$. From now on, we add the superscript $\ell$ to the strategy $\pi(\cdot)$ defined above to distinguish the strategies of different investors. The dynamics of the wealth process $V^{\ell}(\cdot)$ of an individual investor $\ell$, invested in the stock market, is determined by 
\begin{equation}
\label{wealth}
    \frac{dV^{\ell}(t)}{V^{\ell}(t)} = \sum_{i=1}^n \pi_i^{\ell}(t) \frac{dX_i(t)}{X_i(t)}, \quad  V^{\ell}(0) = v^{\ell}_{0}.
\end{equation}
\end{enumerate}
\end{definition}

In the rest of the paper, we simplify the notation as follows.
\[
X_i(t) \beta_i(\omega_t) = b_i(\mathcal{X}(t), \mathcal{Y}(t)),\] 
\[ X_i(t) \sigma_{ik} (t) = s_{ik}(\mathcal{X}(t), \mathcal{Y}(t)), \quad \sum_{k=1}^n s_{ik}(\omega_t)s_{jk}(\omega_t) = a_{ij}(\mathcal{X}(t), \mathcal{Y}(t)).
\]
\begin{asmp}
\label{asmp1:main}
\begin{enumerate}[label=\alph*.]
\item \label{asmp1alipfn}
Assume the Lipschitz continuity and linear growth condition are satisfied with Borel measurable mappings $b(\mathbf{x}, \mathbf{y})$, $s(\mathbf{x}, \mathbf{y})$, $\gamma(\mathbf{x}, \mathbf{y})$, $\tau(\mathbf{x}, \mathbf{y})$. For simplicity, we specify the conditions for $b(\mathbf{x}, \mathbf{y})$ and $s(\mathbf{x}, \mathbf{y})$ below, but the conditions for the coefficients $\gamma(\cdot)$ and $\tau(\cdot)$ of the trading volume processes $\mathcal{Y}(\cdot)$ can be written in the same vein.
That is, there exists a constant $C_L, C_G \in (0, \infty)$ that is independent of $t \in [0,T]$, such that
\begin{equation}
\label{bslip}
    |b(\mathbf{x}, \mathbf{y}) - b( \mathbf{x}', \mathbf{y}')| + |s(\mathbf{x}, \mathbf{y}) - s( \mathbf{x}', \mathbf{y}')| \leq C_L \left(|\mathbf{x}-\mathbf{x}'| + |\mathbf{y}-\mathbf{y}'|\right),
\end{equation}
\[
|b(\mathbf{x}, \mathbf{y})| + |s(\mathbf{x}, \mathbf{y})| \leq C_G(1+|\mathbf{x}|+ |\mathbf{y}|),
\]
for any $\left( \mathbf{x}, \mathbf{x}', \mathbf{y}, \mathbf{y}' \right) \in \mathbb{R}_+^n \times \mathbb{R}_+^n \times \mathbb{R}_+^n \times \mathbb{R}_+^n$.
\item \label{mktrisk} The matrix $\tau(\cdot) := \left( \tau_{ik}(\cdot) \right)_{1 \leq i,k \leq n}$ is nondegenerate. The market price of the risk process $\theta: \mathbb{R}^n \times \mathbb{R}^n \rightarrow \mathbb{R}^n$ exists and is square integrable. That is, there exists an $\mathbb{F}$-progressively measurable process such that for any $t \in [0, \infty)$,
\begin{equation}
\label{assum}
\sigma(\omega_t) \theta(\omega_t) = \beta(\omega_t), \ \ \tau(\omega_t) \theta(\omega_t) = \gamma(\omega_t),
\end{equation}
\[
\mathbb{P} \bigg( \int_0^T |\theta(\omega_t)|^2 dt < \infty, \forall T \in (0, \infty) \bigg) = 1,
\]
where $\omega_t := (\mathcal{X}(t), \mathcal{Y}(t))$.
\end{enumerate}
\end{asmp}

{In the remainder of the paper, we adopt the simplified notation $\theta(t) := \theta(\omega_t)$. As briefly introduced, the coefficients $\tau$ and $\gamma$ are not known a priori and need to be solved through a fixed point problem as laid out in Appendix~\ref{apdx:fp}. Thus, the Lipschitz conditions of $\tau$ and $\gamma$ in Assumption 1a) depend on the market coefficients $\beta$, $\sigma$ and strategies.} {The conditions on the market coefficients and strategies ensuring well-posedness of the associated interacting particle system are stated in Appendix~\ref{nmckean}.}
In the scope of a complete market, Assumption~\ref{asmp1:main}\ref{mktrisk} shows that the price of the risk process $\theta(t)$ governs both the risk premium per unit volatility of stocks and the trading volumes, since the market is simultaneously defined by the stocks and investors. The group of investors that we consider in this paper influences the stock through the trading volumes driven by the same $W(\cdot)$. Thus, it does not add an extra risk factor to the market.

\section{Optimization of relative arbitrage in finite systems}
\label{chapu}

We first recall the definition of relative arbitrage in Stochastic Portfolio Theory. 
\begin{definition}[Relative Arbitrage]
\label{ra}
Given two investment strategies $\pi(\cdot)$ and $\rho(\cdot)$, with the same initial capital $V^{\pi}(0) = V^{\rho}(0) =1$, we shall say that $\pi(\cdot)$ generates an arbitrage opportunity relative to $\rho(\cdot)$ over the time horizon $[0,T]$, with a given $T>0$, if
$$
\mathbb{P} \big( V^{\pi}(T) \geq V^{\rho}(T) \big) = 1 \quad \text{and} \quad \mathbb{P} \big(  V^{\pi}(T) > V^{\rho}(T) \big) >0. 
$$
\end{definition}

We use the total capitalization
\begin{equation} \label{Mportfolio-2}
X(t) = X_1(t) + \ldots + X_n(t), \quad t \in (0,T]; \quad X(0) = x_0 := x_1 + \cdots + x_n
\end{equation}
to represent the capitalization of the entire market.
The {\it market portfolio} $\mathbf{m}$ amounts to the ownership of the entire market by investing in proportion to the market weight of each stock,
\begin{equation}
\label{Mportfolio}
\mathbf{m}_i(t) = \pi^{\mathbf{m}}_i(t) := \frac{X_i(t)}{X(t)}, \quad i=1, \ldots, n, \quad t \ge 0.
\end{equation}
Consider the wealth process $V^{\mathbf{m}}(\cdot)$ generated by the market portfolio. Let $V^{\mathbf{m}}(0) = v_0$, 
\begin{equation}
\label{marketweal}
    \frac{d V^{\mathbf{m}}(t)}{V^{\mathbf{m}}(t)} = \sum_{i=1}^n \pi^{\mathbf{m}}_i(t) \cdot \frac{dX_i(t)}{X_i(t)} = \frac{d X(t)}{X(t)} , \quad t \ge 0. 
\end{equation}

In the following sections, instead of treating the reference strategy as the market portfolio as in \cite{opta}, we consider arbitrage opportunities relative to a modified benchmark strategy in Definition~\ref{defbenchmark} and Proposition~\ref{pistarr}. Without loss of generality, assume that $X(0) = V^{\mathbf{m}}(0)$ from now on.
\numberwithin{equation}{section}
\subsection{Relative arbitrage benchmark of the market and investors}

In general, the performance of a portfolio is measured with respect to the market portfolio and other factors. 
For example, asset managers care about not only absolute performance compared to the market index but also relative performance with respect to peer managers --- they try to exploit strategies that achieve an arbitrage relative to market and peer investors. Next, we define the benchmark for the overall performance. 
\begin{definition}[Benchmark]
\label{defbenchmark}
A relative arbitrage benchmark $\mathcal{V}(T)$, $T \in (0,\infty)$, which is the weighted average of the performance of the market portfolio and the average portfolio of the $N$ investors, is defined as
\begin{equation} 
\label{benchmark}
\mathcal{V}(T) = \delta \cdot X(T) + (1-\delta) \cdot  \overline{V}(T) , \quad \text{ where } \quad \overline{V}(\cdot) := \frac{1}{N} \sum_{\ell=1}^N V^{\ell}(\cdot) 
\end{equation}
is the average of the wealth processes $V^\ell(\cdot)$, $\ell=1,\ldots , N$.
$\delta \in (0,1]$ is a given constant weight. $X(T)$ is the wealth of the market portfolio in \eqref{Mportfolio-2}.
\end{definition}

We assume that each investor measures the logarithmic ratio of his or her own wealth at time $T$ to the benchmark in \eqref{benchmark}, and searches for a strategy such that the logarithmic ratio remains, almost surely, above a personal level of preference. For $\ell = 1, \ldots, N$, we denote the investment preference of investor $\ell$ by $c_{\ell}$, a real number given at $t=0$. Note that $c_{\ell}$ is an investor-specific constant and thus might be different between individuals $\ell = 1,\ldots, N$. An arbitrary investor $\ell$ tries to achieve 
\begin{equation} 
\label{eq: arbitrage3.2}
\log \frac{V^{\ell}(T)}{\mathcal{V}(T)} \ge c_{\ell}, \quad \text{a.s.} \quad 
\text{ or equivalently, } \quad 
V^{\ell}(T) \geq e^{c_{\ell}} \mathcal{V}(T), \quad \text{a.s.}
\end{equation}
Thus, $\mathcal{V}(T)$ is the benchmark, and an investor $\ell$ aims to match $e^{c_{\ell}}\mathcal{V}(T)$ according to their preferences. For simplicity, we define the average 
\[
\overline{e^{c}} := \frac{1}{N} \sum_{\ell=1}^N e^{c_{\ell}}.
\]

The following proposition shows that the benchmark $\mathcal{V}$ is a valid wealth process.
\begin{prop}
\label{pistarr}
The benchmark $\mathcal{V}(t) = \delta X(t) + (1-\delta) \overline{V}(T) $ in \eqref{benchmark} can be generated from a strategy $\Pi \in \mathbb{A}$, where $\Pi(\cdot) := (\Pi_1(\cdot), \ldots, \Pi_n(\cdot))$,
\begin{equation}
\label{bigpi}
\Pi_i(t) = \frac{1}{\mathcal{V}(t)} \big( \delta X_i(t) + (1-\delta)\mathcal{Y}_i(t)\big), 
\end{equation}
where $\mathcal{Y}_i(t)$ is defined in \eqref{eq: y}.
\end{prop}

\begin{proof}
To show that $\mathcal{V}(t)$ is a wealth process generated by a strategy, we use \eqref{wealth}, \eqref{eq: y}, \eqref{bigpi}  to get
\[
\frac{d\mathcal{V}(t)}{\mathcal{V}(t)} = \frac{1}{ \mathcal{V}(t)} \bigg(\delta dX(t) + \frac{1}{N}(1-\delta)\sum_{\ell=1}^N \sum_{i=1}^n V^{\ell} \pi_i^{\ell}(t) \frac{dX_i(t)}{X_i(t)} \bigg) = \sum_{i=1}^n \Pi_i(t) \frac{dX_i(t)}{X_i(t)} , \quad \text{for}  \ t \in (0,T],
\]
and
\begin{equation} \label{eq:mathcalV(0)}
\mathcal{V}(0) = \delta X(0) + \frac{1-\delta}{N}\sum_{\ell=1}^N v^{\ell},
\end{equation}
where
\[
\begin{aligned}
\Pi_i(t) &= \frac{\delta X(t)}{\mathcal{V}(t)} \mathbf{m}_i(t) + \frac{(1-\delta)}{N \mathcal{V}(t)} \sum_{\ell=1}^N V^{\ell} \pi^{\ell}_i(t) 
= \frac{\delta X_i(t) + (1-\delta)\mathcal{Y}_i(t)}{\mathcal{V}(t)}.
\end{aligned}
\]
Further computations show that $\Pi_i(t)$ satisfies the self-financing condition \eqref{admpi}. $\Pi \in \mathbb{A}$ since $\sum_{i=1}^N \Pi_i(t) = 1$ and $0 < \Pi_i(t) < 1$, for any $t \in [0,T]$, $i = 1, \ldots, n$. { For $i = 1, \ldots , n$, $t \ge 0$ we can write $\Pi$ as the feedback form \[\Pi_i(t) := \phi_i(\mathcal{X}(t), \mathcal{Y}(t)) = \frac{\delta X_i(t) + (1-\delta)\mathcal{Y}_i(t)}{\delta \sum_{i=1}^N X_i(t) + (1-\delta) \sum_{i=1}^N\mathcal{Y}_i(t)}.\]}
\end{proof}

\subsection{Optimization in relative arbitrage}
\label{sec: opt_in_ra}
Now, we shall answer the first question posed in the Introduction. What is the best strategy to achieve relative arbitrage over $[0,T]$ for investor $\ell = 1, \ldots , N$, with the fixed portfolios of the rest of the investors? {We adapt the idea in \cite{opta} and define the optimal arbitrage for an investor given that the portfolios of other investors are fixed. This is the first step in the construction of the optimal arbitrage between $N$ investors, since investors do not have access to the portfolios of other investors in reality.} 
Using the optimal strategy $\pi^{\ell \star}$, the investor $\ell$ will start with the least amount of initial capital (or initial cost) relative to $\mathcal{V}(0)$, to match or exceed the benchmark $e^{c_{\ell}} \mathcal{V}(T)$ at the terminal time $T$. {In the following definition, we add a superscript of ${v^{\ell}, \pi^{\ell}}$ to the wealth \eqref{wealth}, and denote it as $V^{v^{\ell}, \pi^{\ell}}(T)$ to emphasize that the wealth depends on the initial condition $v^{\ell}$ and the strategy $\pi^{\ell}(\cdot)$ in the below definition. For the rest of the paper, the notation of wealth is usually simplified as $V^\ell(\cdot)$.}

\begin{definition}[Optimal arbitrage among agents]
\label{uTdef}
Given the market dynamics with  {$(\mathcal{X}(0), \mathcal{Y}(0)) := (\mathbf{x}, \mathbf{y})$}, the initial wealth of the other investors $v^{-\ell}:= (v^{1}(\cdot),\ldots, v^{\ell-1}(\cdot), v^{\ell+1}(\cdot),\ldots,v^N(\cdot))$, and the admissible portfolios 
\begin{equation}
\label{eq: pi-ell}
    \pi^{-\ell}(\cdot) := (\pi^{1}(\cdot),\ldots, \pi^{\ell-1}(\cdot), \pi^{\ell+1}(\cdot),\ldots,\pi^N(\cdot)),
\end{equation}
where each portfolio $\pi^{k} \in \mathbb{A}$, $k = 1, \ldots, \ell-1, \ell+1, \ldots, N$.
In the market system \eqref{eq: x}-\eqref{eq: y}, for every $\ell = 1, \ldots, N$, the investor $\ell$ pursues the optimal arbitrage characterized by the smallest initial relative wealth
\begin{equation}
\label{utobj}
u^{\ell}(T) = \inf \bigg \{ w^{\ell} \in (0, \infty) \,  \Big \vert \, \text{ there exists } 
\pi^{\ell} \in \mathbb{A} \text{ such that } \,v^{\ell} = w^{\ell} \mathcal{V}(0) , \, \,  V^{v^{\ell}, \pi^{\ell}}(T) \geq e^{c_{\ell}} \cdot \mathcal{V}(T) \bigg \} 
\end{equation}
and the relative arbitrage portfolio $\{\pi^\ell(t)\}_{0 \leq t \leq T}$ that achieves the smallest initial relative wealth $u^\ell(T)$. 
\end{definition}

Note that in Definition~\ref{uTdef}, the optimal strategy vector $\pi^{\ell}(\cdot)$ is not necessarily uniquely determined. Thus, the investor $\ell$ searches for his or her smallest initial wealth relative to the benchmark $\mathcal{V}(0)$, and the optimal strategy $\pi^{\ell}(\cdot)$ as a pair.  The initial benchmark $\mathcal{V}(0)$ is fixed {as $(\mathbf{x}, \mathbf{y})$ is given in the definition, and the smallest initial wealth of the investor $\ell$ follows $v^\ell = \omega^\ell \mathcal{V}(0) = \omega^\ell \left(\delta \mathbf{x} \cdot \mathbf{1} + (1-\delta) \mathbf{y}\cdot \mathbf{1}\right)$. In other words, although we have the definition \eqref{eq:defy} and \eqref{benchmark}, here $v^{-\ell}$ and $\pi^{-\ell}(\cdot)$ are fixed, and $v^\ell$ is given by $(\mathbf{x}, \mathbf{y})$ for the above reasons.}

We define the deflator based on the market price of the risk process $L(t)$ as
\begin{equation}
    \label{ltmg}
dL(t) = - \theta(\omega_t) L(t) dW_t, \quad t \ge 0, \quad L(0) = 1 .  
\end{equation}
{Recall that $\omega_t := (\mathcal{X}(t), \mathcal{Y}(t))$ as in Assumption~\ref{asmp1:main}.} Equivalently,
\[
L(t) := \exp \Big\{- \int_0^t \theta'(\omega_s) dW(s) - \frac{1}{2} \int_0^t |\theta(\omega_s)|^2 ds \Big\}, \quad 0 \leq t < \infty. 
\]

The market is endowed with the existence of a local martingale $L$ with $\E[L(T)] \leq 1$ under Assumption~\ref{asmp1:main}\ref{mktrisk}. The discounted processes $\widehat{V}^{\ell}(\cdot) := V^{\ell}(\cdot) L(\cdot)$, and $\widehat{X}(\cdot) := X(\cdot) L(\cdot)$. $\widehat{V}^{\ell}(\cdot)$ admits
\begin{equation}
\label{vlhatt}
\begin{aligned}
   d \widehat{V}^{\ell}(t) = d V^{\ell}(t) L(t) =  \widehat{V}^{\ell}(t)\big(\pi^{\ell\prime}(t) \sigma(t) -  \theta'(\omega_t) \big)dW(t); \quad  \widehat{V}^{\ell}(0) = \widehat{v}_{\ell}, \, \, \ell = 1, \ldots , N\, , 
\end{aligned}
\end{equation}
\begin{equation}
\label{xihat}
d \widehat{X}_i(t) = \widehat{X}_i(t) \sum_{k=1}^n (\sigma_{ik}(t) - \theta_k(\omega_t)) dW_k(t); \quad \widehat{X}_i(0) = x_i, \, \, i =1, \ldots , n \, ,  
\end{equation}
\begin{equation}
\label{xhat}
d \widehat{X}(t) = \widehat{X}(t) \sum_{k=1}^n (\sum_{i=1}^n \mathbf{m}_i(t) \sigma_{ik}(t) - \theta_k(\omega_t)) dW_k(t); \quad \widehat{X}(0) = x.
\end{equation}

\begin{remark}
\label{clprop1}
Under Assumption~\ref{asmp1:main}\ref{mktrisk}, suppose, in addition, that the market $\mathcal{M}$ has bounded variance (Appendix~\ref{defs}). Denote $\Bar{v} := \frac{1}{N} \sum_{\ell=1}^N v^{\ell}$. On $[0,T]$, given the existence of relative arbitrage in the sense of \eqref{eq: arbitrage3.2} and Definition~\ref{defbenchmark},  the process $L(\cdot)$ is a strict local martingale, i.e., $\mathbb{E}[L(T)]<1$.

This can be proved by contradiction, assuming that $L(T)$ is a martingale. We generalize Proposition 6.1 in \cite{mf} where the case $N=1$ is studied. 
As $u^{\ell}(T) = \frac{v^{\ell}}{\mathcal{V}(0)} \leq e^{c_{\ell}}$ for some $\ell$, it holds that
\begin{equation}
\label{clieq}
 c_{\ell} \geq \log v_{\ell} - \log(\delta x_0 + (1-\delta)\Bar{v}). 
\end{equation} 
By the Girsanov theorem, $\mathbb{Q}_{\mathbf{x}, \mathbf{y}}(A) := \mathbb{E}^{\mathbf{x}, \mathbf{y}}[L(T) 1_A], A \in \mathcal{F}_T$ defines a probability measure that is equivalent to $\mathbb{P}$. We can show that $\Delta^{\ell}(t) := V^{\ell}(t) - e^{c_{\ell}}(\delta X(t) + (1-\delta) \frac{1}{N} \sum_{k=1}^N V^{k}(t))$ is a martingale under $\mathbb{Q}_{\mathbf{x}, \mathbf{y}}$. 
Hence, the martingale property of $\Delta^{\ell}(t)$ and \eqref{clieq} suggests that 
$ \mathbb{E}^{\mathbb{Q}_{\mathbf{x}, \mathbf{y}}} [\Delta^{\ell}(T)] = \mathbb{E}^{\mathbb{Q}_{\mathbf{x}, \mathbf{y}}} [\Delta^{\ell}(0)] = v_{\ell}-e^{c_{\ell}}\delta x_0 - e^{c_{\ell}}(1-\delta)\Bar{v} \leq 0$, 
contrary to Definition~\ref{ra} of the relative arbitrage. Thus, the process $L(\cdot)$ is a strict local martingale.
\end{remark}

\subsection{The modified subproblems}
\label{dp}


{We emphasize that in Section~\ref{chapu}, $(\mathbf{x}, \mathbf{y})$ are fixed and the dynamics are assumed to be given. In reality, the investor has no access to the state and strategy of everyone else; otherwise, this would be the same as the single investor's optimal arbitrage \cite{opta} with a modified market dynamics. Nevertheless, the previous analysis gives a first step in decoupling the trajectories of market dynamics, investor wealth, and investment strategies so that the problem can be analyzed first from a single investor's perspective. We continue this single investor's perspective in this section and characterize the optimal solution to relative arbitrage opportunities. The complete setup will be treated in Section~\ref{sec: nplayergame}.}


The following proposition \ref{f1} characterizes the best relative arbitrage opportunities using the  benchmark $e^{c_{\ell}} \cdot \mathcal{V}(T)$, for any $\ell = 1, \ldots, N$. $T$ is a fixed real positive number and $N$ is a fixed natural number. 
We shall assume that $\mathbb{F} = \mathbb{F}^{\mathcal{X},\mathcal{Y}} = \mathbb{F}^W$, where $\mathbb{F}^{\mathcal{X},\mathcal{Y}}$ is the filtration generated by the $\sigma$-fields $\{\sigma(\mathcal{X}(s), \mathcal{Y}(s) ; 0 < s < t), t \ge 0 \}$. In general, the trading strategy does not necessarily have to be measurable with respect to $\mathbb{F}^{\mathcal{X},\mathcal{Y}}$. An example of a single investor case is presented in \cite[Example 6]{ruf2}. In Section~\ref{sec: nplayergame}, the trading strategies generated in $\mathbb{F}^{\mathcal{X},\mathcal{Y}}$ are the optimal strategy that we will derive explicitly. Then $\mathbb{F}^{W} = \mathbb{F}^{\mathcal{X},\mathcal{Y}}$ is a natural relation from the structure of the optimal strategy we consider and is required for the derivation of the results of the martingale representation below. {In the following, $\mathbb{E}^{\mathbf{x}, \mathbf{y}} [\cdot] := \mathbb{E} \left[\cdot| \left(\mathcal{X}(0), \mathcal{Y}(0) \right) = (\mathbf{x}, \mathbf{y}) \right]$.}

\begin{prop}
\label{f1}
Under Assumption~\ref{asmp1:main}, for $\ell = 1, \ldots, N$, assume {$\pi^{\ell}(s)$ to be the optimal strategy over $s \in [t,T]$ in Definition~\ref{uTdef} that achieves $u^{\ell}(T-t, \mathbf{x}, \mathbf{y})\in C^{1,3,3}$}. With the admissible portfolio $\pi^{-\ell}(\cdot)$ defined in \eqref{eq: pi-ell}, and the initial values $(\mathbf{x}, \mathbf{y})$, 
$u^{\ell}(T)$ in \eqref{utobj} can be derived as the discounted expected values of $e^{c_{\ell}} \mathcal{V}(T)$, that is,
\begin{equation}
\label{ute}
        u^{\ell}(T, \mathbf{x}, \mathbf{y}) = e^{c_{\ell}} \mathbb{E}^{\mathbf{x}, \mathbf{y}} \big[ \mathcal{V}(T) L(T) \big]\, /\, \mathcal{V}(0).
\end{equation}
\end{prop}

This result is essential for the PDE characterization of the objective $u^{\ell}(\cdot)$ in Section~\ref{5.1}. {$u^{\ell}(\cdot)$ is an investor-specific quantity through the preference term $e^{c_{\ell}}$. For an arbitrary individual $\ell$, $c_{\ell}$ is determined at initial time, and \eqref{ute} is satisfied for this individual. The choice of $u^\cdot(t, \mathbf{x}, \mathbf{y}) \in C^{1,3,3}((0, \infty) \times (0, \infty)^n \times (0, \infty)^n)$ guarantees that the optimal strategy $\pi^{\ell}$ is in the admissible set $\mathbb{A}$, which is explained later in Appendix D, equation \eqref{gonc}.}
The proof is based on the supermartingale property of $\widehat{V}^{\ell}(\cdot)$ and the martingale representation theorem; see Appendix~\ref{racp} for details of the proof.

Denote $\tilde \tau := T-t$. By Assumption~\ref{asmp1:main}\ref{mktrisk} and the Markovian market setup, we have the Markovian property of $u^{\ell}(\cdot)$, 
\begin{equation}
\label{ggg}
   u^{\ell}(\tilde \tau, \mathbf{x}, \mathbf{y})= e^{c_{\ell}} \frac{\mathbb{E}[\mathcal{V}(T)L(T) | \mathcal{F}_t] }{\mathcal{V}(t)L(t)}.
\end{equation}
We can understand $u^{\ell}(T-t)$ as the initial optimal arbitrage quantity to start at $t \in [0,T)$ such that we match or exceed the benchmark portfolio at the terminal time $T$, that is,
\begin{equation}
\label{utobj1}
u^{\ell}(T-t) = \inf \bigg \{ \omega^{\ell} \in (0, \infty) \,  \Big \vert \, \text{ there exists } 
\widetilde{\pi}^{\ell} \in \mathbb{A} \text{ such that } \,v^{\ell} = \omega^{\ell} \widetilde{\mathcal{V}}(t) , \, \,  \widetilde{V}^{{\ell}}(T) \geq e^{c_{\ell}} \cdot \widetilde{\mathcal{V}}(T) \bigg \} 
\end{equation}
for $ 0 \le t \le T$ and $\ell = 1, \ldots , N$. 

At every time $t$, each investor 
optimizes $\, \widetilde{\pi}^\ell(\cdot)\, $ from $t$ to $T$, to obtain the optimal quantity as defined in \eqref{utobj1}. In other words, here $\widetilde{V}^\ell(T) = \widetilde{V}^{v^\ell , \widetilde{\pi}^\ell} (T) \,$ is generated by the admissible portfolio $\, \{ \widetilde{\pi}^\ell (s) , s \ge t\} \, $ starting from time $t \ge 0$. That is, we consider the portfolio $\, \widetilde{\pi}^\ell(s)\, $, $\, t \le s \le T$ so that the corresponding portfolio value $\, \widetilde{V}^\ell (s)\, $, $\, t \le s \le T\, $ satisfies $\, \widetilde{V}^\ell (T) \ge e^{c_\ell} \cdot \widetilde{\mathcal{V}}(T)$, where $\widetilde{V}^{\ell} (t) = v^\ell = \omega^\ell \widetilde{\mathcal{V}}(t) \, ,$
\[
\frac{d \widetilde{V}^{\ell}(s)}{ \widetilde{V}^{\ell} (s)} = \sum_{i=1}^n \widetilde{\pi}_i^\ell (s) \cdot \frac{d X_i (s)}{\, X_i (s) \,}\, ; \quad t \le s \le T\, ,  
\]
\[
\widetilde{\mathcal{V}}(s) := \delta \cdot X(s) + (1-\delta) \cdot \frac{1}{N} \sum_{\ell=1}^N \widetilde{V}^{\ell}(s), \quad t \le s \le T \, .
\]

{The following proposition gives the conditions for $\{c_{\ell}\}$ and $\delta$ to achieve relative arbitrage opportunities. It also gives the initial amount of capital required for homogeneous investors. We show that direct interaction with the market is important.
One possible interpretation of relative arbitrage here is the presence of noise traders whose correlated misperceptions of returns generate additional fluctuations in the market. Consistent with the observation in \cite{de1990noise}, the presence of noise traders offers institutional investors a better opportunity, in the sense that a market with only institutional investors will not have arbitrage opportunities of risky assets.}
\begin{prop}
\label{prop: clprop}
 {Given the initial conditions $(\mathbf{x}, \mathbf{y}) \in \mathbb{R}_+^n \times \mathbb{R}_+^n$, the following properties of relative arbitrage opportunities with respect to $c_{\ell}$ and $\delta$ hold:}
\begin{enumerate}
    \item If every investor achieves a relative arbitrage opportunity in the sense of \eqref{eq: arbitrage3.2}, then we must have 
    \begin{equation}
    \label{ccondc}
    \frac{(1-\delta)}{N} \sum_{\ell=1}^N e^{c_{\ell}} < 1.       
    \end{equation}
    
\item There is no relative arbitrage opportunity when $\delta=0$ and $c_{\ell}\geq 0$ for every $\ell$.

\item When $0<\delta \leq 1$,
we have
\begin{equation}
\label{v0v0}
    \mathcal{V}(0) = \frac{\delta x_0}{1-\frac{1-\delta}{N} \sum_{k=1}^N u^{k}(T, \mathbf{x}, \mathbf{y})}, \quad \text{thus} \quad v^{\ell} = \frac{ u^{\ell}(T, \mathbf{x}, \mathbf{y}) \delta x_0}{1-\frac{1-\delta}{N} \sum_{k=1}^N u^{k}(T, \mathbf{x}, \mathbf{y})},
\end{equation}
where $x_0$ is the sum of each coordinate of $\mathbf{x}$, as defined in \eqref{Mportfolio-2}.

\end{enumerate}
\end{prop}

Thus, in this paper, we mainly discuss relative arbitrage opportunities with $0<\delta \leq 1$. We now provide the proof and discuss its financial interpretation.

\begin{proof}
\begin{enumerate}
    \item Since everyone follows $V^{\ell}(T) \geq e^{c_{\ell}} \mathcal{V}(T)$, we sum up this expression for $\ell = 1, \ldots, N$ to get an inequality of $\sum_{\ell=1}^N V^{\ell}(t)/N$, and \eqref{ccondc} follows immediately. We easily see that if all investors adopt the market portfolio $\mathbf{m}(\cdot)$, then any $c_{\ell} \leq 0$, $\ell = 1, \ldots, N$ is a valid level of satisfaction. Here, \eqref{ccondc} 
tells us that some of $c_{\ell}$'s can be small positive numbers. 
\item Next, by Definition~\ref{ra}, if we have 
\[
c_{\ell} \leq  \log \bigg ( \frac{V^{\ell}(T)}{\mathcal V^{N}(T)} \bigg ) = \log \bigg(\frac{V^{\ell}(T)}{\delta X^N(t) + (1-\delta) \overline{V}(T) }\bigg) , \quad \ell = 1, \ldots , N , 
\] 
then the relative arbitrage exists in the sense of \eqref{eq: arbitrage3.2}. 

From \eqref{utobj}, we get $v^{\ell} = u^{\ell}(T) \mathcal{V}(0)$ for $\ell = 1, \ldots, N$. When $\delta = 0$, $c_{\ell} \geq 0$ for every $\ell$, then \eqref{ccondc} is violated, and there is no relative arbitrage opportunity. This means that in an extremely competitive group, where every investor solely wants to outperform one another in the market, the relative arbitrage is not possible. 

\item Let $\mathbf{1}$ be the $n \times 1$ column vector that has all $n$ elements equal to one. When $\delta>0$, by Definition~\ref{uTdef},
\begin{equation}
\label{eq: fx-u0}
u^{\ell}(T, \mathbf{x}, \mathbf{y}) \mathcal{V}(0) = v_0^{\ell}, \quad (\mathbf{x}, \mathbf{y}) :=(\mathcal{X}(0), \mathcal{Y}(0)),
\end{equation}
where
\begin{equation}
    \label{calv}
\mathcal{V}(0) = \delta \sum_{i=1}^n x_i + \sum_{i=1}^n \sum_{\ell=1}^N v^{\ell}_0 \pi^{\ell}_i(0) = \delta \sum_{i=1}^n x_i + (1-\delta) \sum_{i=1}^n y_i = \delta \mathbf{x} \cdot \mathbf{1} + (1-\delta) \mathbf{y} \cdot\mathbf{1}.
\end{equation}
By \eqref{eq: fx-u0}, it follows that $\mathbf{y} \cdot\mathbf{1} = \frac{1}{N} \sum_{\ell=1}^N u^{\ell}(T, \mathbf{x}, \mathbf{y}) [\delta \mathbf{x} \cdot \mathbf{1} + (1-\delta) \mathbf{y} \cdot\mathbf{1}]$, so
\begin{equation}
    \label{eq:ysumfix}
    \mathbf{y} \cdot\mathbf{1} =  \frac{1}{N}  \delta \mathbf{x} \cdot \mathbf{1}\sum_{\ell=1}^N u^{\ell}(T, \mathbf{x}, \mathbf{y}) \left[ 1-(1-\delta)\frac{1}{N} \sum_{\ell=1}^N u^{\ell}(T, \mathbf{x}, \mathbf{y}) \right]^{-1}.
\end{equation}
Further computation based on \eqref{eq: fx-u0}-\eqref{eq:ysumfix} shows that \eqref{v0v0} follows. 

\end{enumerate}
\end{proof}

\subsection{PDE characterization of the best relative arbitrage}
\label{5.1}

In this section, we characterize $u^{\ell}(\tilde \tau,  \mathbf{x}, \mathbf{y})$ in \eqref{ggg} as a solution to a Cauchy problem. 

We write $D_i$ and  $D_{ij}$  for  the first- and second-order partial derivatives  with  respect  to  the $i$-th or the $(i,j)$-th variables in $\mathcal{X}(\cdot)$, respectively; $D_p$ and  $D_{pq}$  for the first and second partial derivative in $\mathcal{Y}(\cdot)$, $D_{pi}$ is the cross derivative in $\mathcal{X}(\cdot)$ and $\mathcal{Y}(\cdot)$.

\begin{asmp}
\label{hasmp}
There exists a function $H: \mathbb{R}_+^n \times \mathbb{R}_+^n \rightarrow \mathbb{R}$ of class $C^2$, such that,
\[
b( \mathbf{x}, \mathbf{y}) = 2a( \mathbf{x}, \mathbf{y}) D_x H( \mathbf{x},\mathbf{y}), \quad \gamma( \mathbf{x}, \mathbf{y}) = 2\psi( \mathbf{x}, \mathbf{y}) D_y H(\mathbf{x}, \mathbf{y}).
\]
That is, component-wise $b
_i(\cdot) =\sum_{j=1}^n a_{ij}(\cdot) D_{j} H (\cdot)$,  $\gamma
_p(\cdot) =\sum_{q=1}^n \psi_{pq}(\cdot) D_{q} H(\cdot)$, for $i,p = 1, \ldots , n$. 
\end{asmp}
{We provide an example that satisfies the assumption above in Example~\ref{ex: vsm}, which is inspired by volatility-stabilized market models.}


The generator for the process $(\mathcal X(\cdot), \mathcal Y(\cdot)) $ can be written as
\[
\begin{aligned}
 \mathcal{L}f & := \sum_{i=1}^n \sum_{j=1}^n a_{ij}( \mathbf{x}, \mathbf{y}) \big[ \frac{1}{2}  D_{ij} f + 2 D_i f D_j H( \mathbf{x}, \mathbf{y}) \big] + \sum_{p=1}^n \sum_{q=1}^n \psi_{pq}( \mathbf{x}, \mathbf{y}) \big[ \frac{1}{2}  D_{pq} f + 2 D_p f D_q H( \mathbf{x}, \mathbf{y}) \big]\\
 &+ \frac{1}{2} \sum_{i=1}^n \sum_{p=1}^n (s \tau')_{ip}( \mathbf{x}, \mathbf{y}) D_{ip} f + \frac{1}{2} \sum_{i=1}^n \sum_{p=1}^n (\tau s')_{pi}( \mathbf{x}, \mathbf{y}) D_{pi} f,
 \end{aligned}
\]
where $(\tau s')_{pi}( \mathbf{x}, \mathbf{y}) = (s \tau')_{ip}( \mathbf{x}, \mathbf{y}) = \sum_{k=1}^K s_{ik}(\mathbf{x}, \mathbf{y}) \tau_{pk}( \mathbf{x}, \mathbf{y})$. By the definition of $\theta(\cdot)$ in \eqref{assum} and Assumption~\ref{hasmp}, 
 \begin{equation}
 \label{cl1}
     \theta( \mathbf{x}, \mathbf{y}) = s'( \mathbf{x}, \mathbf{y}) D_x H(\mathbf{x}, \mathbf{y}) + \tau'( \mathbf{x}, \mathbf{y}) D_y H(\mathbf{x}, \mathbf{y}).
 \end{equation}
Then it follows from \eqref{cl1} and It\^o's lemma applying on $H(\cdot)$ that
\[
\begin{aligned}
    L(t) &= \exp \bigg \{-\int_0^t \theta'(s) dW(s) - \frac{1}{2} \int_0^t |\theta(s)|^2 ds \bigg \}\\
    &= \exp \bigg\{ - H(\mathcal{X}(t),\mathcal{Y}(t)) +  H( \mathbf{x},\mathbf{y}) - \int_0^{t} ( k(\mathcal{X}(s), \mathcal{Y}(s)) + \Tilde{k}(\mathcal{X}(s), \mathcal{Y}(s)) ) ds \bigg\},
\end{aligned}
\]
where
\begin{equation}
\label{eq: kxy}
k(\mathbf{x}, \mathbf{y}) := - \sum_{i=1}^n \sum_{j=1}^n \frac{a_{ij}( \mathbf{x}, \mathbf{y})}{2} [ D_{ij}^2 H( \mathbf{x},\mathbf{y}) + 3 D_i H ( \mathbf{x},\mathbf{y}) D_j H( \mathbf{x},\mathbf{y})],    
\end{equation}
\begin{equation}
\label{eq: ktildexy}
\begin{aligned}
\Tilde{k}( \mathbf{x}, \mathbf{y}) := -\sum_{p=1}^n \sum_{q=1}^n \frac{\psi_{pq}( \mathbf{x}, \mathbf{y})}{2} [ D_{pq}^2 H(\mathbf{x},\mathbf{y}) + 3 D_p H(\mathbf{x},\mathbf{y}) D_q H(\mathbf{x},\mathbf{y})] + \sum_{i=1}^n \sum_{p=1}^n (s \tau')_{ip} D_i H ( \mathbf{x},\mathbf{y}) D_p H( \mathbf{x},\mathbf{y}) 
\end{aligned}
\end{equation}
for $( \mathbf{x},\mathbf{y}) \in (0,\infty)^n \times (0,\infty)^n$. 
Thus, \eqref{ggg} can be written as 
\begin{equation}
\label{ggg1}
   u^{\ell}(\tilde \tau, \mathbf{x}, \mathbf{y})= e^{c_{\ell}} \frac{G(\tilde \tau, \mathbf{x}, \mathbf{y})}{g( \mathbf{x}, \mathbf{y})},
\end{equation}
where, by \eqref{calv},
\begin{equation}
    \label{geqng}
g( \mathbf{x}, \mathbf{y}) := \mathcal{V}(0) e^{-H( \mathbf{x},\mathbf{y})} = \big( \delta \mathbf{x} \cdot \mathbf{1} + (1-\delta) \mathbf{y} \cdot\mathbf{1} \big) e^{-H( \mathbf{x},\mathbf{y})},
\end{equation}
\[
G(T, \mathbf{x}, \mathbf{y}) :=  \mathbb{E}^{\mathbf{x}, \mathbf{y}} \big[ g(\mathcal{X}(T), \mathcal{Y}(T)) e^{- \int_0^{T} k(\mathcal{X}(t))+ \Tilde{k}(\mathcal{Y}(t)) dt} \big]. 
\]

\begin{asmp}
\label{asmp: geqn}
The function $G(\cdot) \in C^{1,3,3}\left((0,\infty) \times (0, \infty)^n \times (0, \infty)^n \right)$ yields the following dynamics by the Feynman-Kac formula,
\begin{equation}
\begin{aligned}
    \frac{\partial G}{\partial \tilde{\tau}}(\tilde \tau, \mathbf{x}, \mathbf{y}) = \mathcal{L} G(\tilde \tau, \mathbf{x}, \mathbf{y})- \big( &k(\mathbf{x}, \mathbf{y})+ \Tilde{k}(\mathbf{x}, \mathbf{y}) \big) G(\tilde \tau,  \mathbf{x}, \mathbf{y}), \ \ \ \ (\tilde \tau, \mathbf{x}, \mathbf{y})\in \mathbb{R}_+ \times \mathbb{R}_+^n \times \mathbb{R}_+^n,\\
G(0, \mathbf{x}, \mathbf{y}) = &g( \mathbf{x}, \mathbf{y}),\  ( \mathbf{x}, \mathbf{y}) \in \mathbb{R}_+^n \times \mathbb{R}_+^n,
\end{aligned}
\end{equation}
\end{asmp}

By \cite{bmk}, this assumption is satisfied if we also assume that $k(\cdot)$ is $\alpha$-H\"older continuous with some $\alpha \in (0,1]$, uniformly in compact subsets of $\mathbb{R}_+^n \times \mathbb{R}_+^n$, $\ell = 1,\ldots, N$; $k(\mathbf{x},\mathbf{y})$ is bounded from below, $\gamma(\cdot)$ and $\tau(\cdot)$ are continuously differentiable on $(0, \infty)^n \times (0, \infty)^n$, and satisfy the growth condition 
\[
|\gamma(\mathbf{x},\mathbf{y})| + | \tau(\mathbf{x},\mathbf{y})| \leq M (1 + |\mathbf{x}| + |\mathbf{y}|).
\]
Under Assumption~\ref{asmp: geqn},
$u^{\ell}(\tilde \tau, \mathbf{x}, \mathbf{y})\in C^{1,3,3}((0, \infty) \times (0, \infty)^n \times (0, \infty)^n)$ is bounded on $K \times (0, \infty)^n \times (0, \infty)^n$ for each compact $K \subset (0, \infty)$. After the calculations (Appendix~\ref{racp}) based on \eqref{ggg} and \eqref{geqng}, $u^{\ell}(\cdot)$ follows a Cauchy problem
\begin{equation}
\label{eqsol}
\frac{\partial u^{\ell}(\tilde \tau,  \mathbf{x}, \mathbf{y})}{\partial \tilde \tau} = \mathcal{A} u^{\ell}(\tilde \tau,  \mathbf{x}, \mathbf{y}), \quad \tilde \tau \in (0,\infty),\ ( \mathbf{x}, \mathbf{y}) \in (0, \infty)^n \times (0, \infty)^n,
\end{equation}
\begin{equation}
    \label{eqsol2}
u^{\ell}(0,  \mathbf{x}, \mathbf{y}) = e^{c_{\ell}}, \hspace{1.5cm} ( \mathbf{x}, \mathbf{y}) \in (0, \infty)^n \times (0, \infty)^n,
\end{equation}
where
\begin{equation}
\label{aineq}
\begin{aligned}
\mathcal{A} u^{\ell}(\tilde \tau, \mathbf{x}, \mathbf{y})=& \frac{1}{2} \sum_{i=1}^n \sum_{j=1}^n a_{ij} ( \mathbf{x}, \mathbf{y}) \Big(D_{ij}^2 u^{\ell}(\tilde \tau, \mathbf{x}, \mathbf{y})+ \frac{ 2 \delta D_i u^{\ell}(\tilde \tau,  \mathbf{x}, \mathbf{y})}{ \delta \mathbf{x} \cdot \mathbf{1} + (1-\delta) \mathbf{y} \cdot\mathbf{1}}\Big) \\
& + \frac{1}{2} \sum_{p=1}^n \sum_{q=1}^n \psi_{pq} ( \mathbf{x}, \mathbf{y}) \Big(D_{pq}^2 u^{\ell}(\tilde \tau,  \mathbf{x}, \mathbf{y})+ \frac{ 2 (1-\delta) D_p u^{\ell}(\tilde \tau,  \mathbf{x}, \mathbf{y})}{ \delta \mathbf{x} \cdot \mathbf{1} + (1-\delta) \mathbf{y} \cdot\mathbf{1}}\Big)\\
& + \sum_{i=1}^n \sum_{p=1}^n (s \tau^T)_{ip}( \mathbf{x}, \mathbf{y}) D_{ip}^2 u^{\ell}(\tilde \tau,  \mathbf{x}, \mathbf{y})\\
& + \sum_{i=1}^n \sum_{p=1}^n (s \tau^T)_{ip}( \mathbf{x}, \mathbf{y}) \frac{ \delta D_p u^{\ell}(\tilde \tau, \mathbf{x}, \mathbf{y})+ (1-\delta) D_i u^{\ell}(\tilde \tau,  \mathbf{x}, \mathbf{y})}{ \delta \mathbf{x} \cdot \mathbf{1} + (1-\delta) \mathbf{y} \cdot\mathbf{1}}.
\end{aligned}
\end{equation}
We emphasize that \eqref{eqsol} is entirely determined by the volatility structure of $\mathcal{X}(\cdot)$ and $\mathcal{Y}(\cdot)$. As a result, when the drift term $\gamma(\cdot)$ and the volatility term $\tau(\cdot)$ in \eqref{eq: y} are given, \eqref{eqsol} - \eqref{aineq} are satisfied. 

\begin{thm}
\label{thm31}
Under Assumption~\ref{asmp1:main}-\ref{asmp: geqn}, the function $u^{\ell} : [0,\infty) \times (0,\infty)^n \times (0,\infty)^n \rightarrow (0, 1]$ is the smallest nonnegative continuous function, of class $C^{1,3,3}$ on $(0,\infty) \times (0,\infty)^n \times (0,\infty)^n$, which satisfies $u^{\ell}(0, \cdot) \equiv e^{c_{\ell}}$ and
\begin{equation}
\label{inequl}
\begin{aligned}
\frac{\partial u^{\ell}(\tilde \tau,  \mathbf{x}, \mathbf{y})}{\partial \tilde \tau} \geq \mathcal{A} u^{\ell}(\tilde \tau,  \mathbf{x}, \mathbf{y}), 
\end{aligned}
\end{equation}
where $\mathcal{A}(\cdot)$ follows \eqref{aineq}.
\end{thm}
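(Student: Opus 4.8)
The plan is to prove the two halves of the statement separately. First I would check that $\Tilde{u}^{\ell}$ is itself one of the functions being compared. By the construction in \eqref{gg}--\eqref{geqn} under Assumption~\ref{hh}, the function $\Tilde{u}^{\ell}=G^{\ell}/g^{\ell}$ is continuous on $[0,\infty)\times(0,\infty)^{n}\times(0,\infty)^{n}$, of class $C^{2}$ on the open set, and a classical solution of the Cauchy problem \eqref{eqsol}--\eqref{eqsol2}; in particular $\Tilde{u}^{\ell}(0,\cdot)\equiv e^{c_{\ell}}$ and \eqref{inequl} holds for it with equality. Positivity is immediate from the probabilistic representation, and the upper bound $\Tilde{u}^{\ell}\le 1$ is inherited from the optimization problem \eqref{utobj}: tracking the benchmark $\mathcal{V}^{N}$ from initial capital $\mathcal{V}^{N}(0)$ is admissible and, under the standing restrictions on $c_{\ell}$ (cf.\ Proposition~\ref{clprop}--\ref{clprop1}), already meets the target, so the defining infimum is at most one. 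Hence $\Tilde{u}^{\ell}$ lies in the class, and only the minimality remains.

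For minimality I would argue by a supermartingale comparison. Fix $\tau>0$ and a starting point $(\mathbf{x}^{N},\mathbf{y})\in(0,\infty)^{n}\times(0,\infty)^{n}$ for $(\mathcal{X}^{N}(\cdot),\mathcal{Y}(\cdot))$, and introduce the discount process
\[
D(s):=\frac{g^{\ell}(\mathcal{X}^{N}(s),\mathcal{Y}(s))}{g^{\ell}(\mathbf{x}^{N},\mathbf{y})}\,\exp\!\Big(-\!\int_{0}^{s}\!\big(k(\mathcal{X}^{N}(r))+\Tilde{k}(\mathcal{Y}(r))\big)\,dr\Big),\qquad D(0)=1 .
\]
By the expression for $L(\cdot)$ recorded just before \eqref{cl1} (the constant factor in $g^{\ell}$ cancelling in the ratio), $D(s)$ equals the deflator $L(s)$ times the benchmark-weight ratio $(\delta X^{N}(s)+1-\delta)/(\delta X^{N}(0)+1-\delta)$; thus $D$ is a nonnegative local martingale, and the Feynman--Kac representation \eqref{gg}--\eqref{geqn} together with \eqref{eqsol2} reads $\Tilde{u}^{\ell}(\tau,\mathbf{x}^{N},\mathbf{y})=\mathbb{E}^{(\mathbf{x}^{N},\mathbf{y})}[e^{c_{\ell}}D(\tau)]$. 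Now take any competitor $w$: $w\ge 0$, continuous, $C^{2}$ on the open domain, $w(0,\cdot)\equiv e^{c_{\ell}}$, $\partial_{\tau}w\ge\mathcal{A}w$, and set $\Phi(s):=w(\tau-s,\mathcal{X}^{N}(s),\mathcal{Y}(s))\,D(s)$. Applying Itô's formula and using the form of the generator $\mathcal{L}$, the identity \eqref{cl1} relating $\theta+\lambda$ to $s\,DH+\tau\,DI$, \eqref{assum}, and Assumption~\ref{hasmp}, the cross-variation of $w(\tau-\cdot,\mathcal{X}^{N},\mathcal{Y})$ with $D$ cancels the $D_{i}H$ and $D_{p}I$ first-order terms of $\mathcal{L}w$ and produces the first-order term $\delta\,[\mathcal{V}^{N}(0)]^{-1}\sum_{i,j}a_{ij}D_{i}w$ of \eqref{aineq} (this last contribution arising precisely from the $\delta\,dX^{N}$ part of the benchmark-weight factor in $D$), so the finite-variation part of $d\Phi(s)$ equals $D(s)\,(\mathcal{A}w-\partial_{\tau}w)(\tau-s,\mathcal{X}^{N}(s),\mathcal{Y}(s))\,ds\le 0$. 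Then $\Phi$ is a nonnegative local supermartingale, hence a supermartingale, and since $\Phi(0)=w(\tau,\mathbf{x}^{N},\mathbf{y})$ and $\Phi(\tau)=w(0,\mathcal{X}^{N}(\tau),\mathcal{Y}(\tau))\,D(\tau)=e^{c_{\ell}}D(\tau)$,
\[
w(\tau,\mathbf{x}^{N},\mathbf{y})=\Phi(0)\ \ge\ \mathbb{E}\big[\Phi(\tau)\big]=\mathbb{E}\big[e^{c_{\ell}}D(\tau)\big]=\Tilde{u}^{\ell}(\tau,\mathbf{x}^{N},\mathbf{y}).
\]
Since $\tau>0$ and $(\mathbf{x}^{N},\mathbf{y})$ are arbitrary, this gives $w\ge\Tilde{u}^{\ell}$ on all of $(0,\infty)\times(0,\infty)^{n}\times(0,\infty)^{n}$, and combined with the first step it shows $\Tilde{u}^{\ell}$ is the smallest such function.

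Two points carry the real work. The first is the Itô computation just invoked: one must verify that the cross-variation of $w(\tau-\cdot,\mathcal{X}^{N},\mathcal{Y})$ with the deflator and with the benchmark weight reproduces \emph{exactly} the operator $\mathcal{A}$ of \eqref{aineq} --- including its non-standard first-order coefficient $\delta[\mathcal{V}^{N}(0)]^{-1}$ and the precise cancellation of the $DH$, $DI$ drifts --- which uses \eqref{assum}, \eqref{cl1}, Assumption~\ref{hasmp}, and (implicitly) the absence of cross-covariation between $\mathcal{X}^{N}$ and $\mathcal{Y}$ already encoded in the form of $\mathcal{L}$. The second, and the part I expect to need the most care, is the localization: $w$ and $\Tilde{u}^{\ell}$ are only $C^{2}$ on the open domain and need not be regular at $\{\tau=0\}$ or on $\partial(0,\infty)^{n}$, so the Itô/supermartingale argument has to be run up to stopping times $\theta_{m}\uparrow\tau$ keeping $(\mathcal{X}^{N},\mathcal{Y})$ in compact subsets of $(0,\infty)^{n}\times(0,\infty)^{n}$ and $\tau-s$ bounded away from $0$, after which one passes $m\to\infty$ using nonnegativity (Fatou), the lower bounds on $k,\Tilde{k}$ and the integrability in Assumption~\ref{hh}, and the continuity of $w$ at $\tau=0$ to recover the terminal value.
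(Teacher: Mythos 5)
Your proposal is correct and follows essentially the same route as the paper's own proof in Appendix~\ref{racp}: one multiplies an arbitrary nonnegative supersolution $w$ by the deflated benchmark process (the paper uses $\Tilde{w}^{\ell}(T-t,\cdot)\,e^{c_{\ell}}\mathcal{V}^N(t)L(t)$, which is your $w(\tau-s,\cdot)D(s)$ up to the constant $e^{c_\ell}$), shows via It\^o's formula and \eqref{inequl} that the product is a nonnegative local supermartingale hence a supermartingale, and concludes $w\ge\Tilde{u}^{\ell}$ from the expectation inequality together with the representation \eqref{ute}/\eqref{gg}. Your additional steps --- verifying that $\Tilde{u}^{\ell}$ itself belongs to the comparison class and localizing the It\^o argument before passing to the limit by Fatou --- are points the paper's proof leaves implicit, and they are handled correctly.
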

The proof of this theorem can be found in Appendix~\ref{racp}. 

\subsection{Existence of Relative Arbitrage}
\label{ficherara}
The Cauchy problem \eqref{eqsol}-\eqref{eqsol2} admits a trivial solution $u^{\ell}(\tau,  \mathbf{x},  \mathbf{y}) \equiv e^{c_{\ell}}$. 
In this section, we discuss the conditions under which relative arbitrage exists. Specifically, one requires $u^{\ell}(\tau, \mathbf{x},  \mathbf{y}) < e^{c_{\ell}}$, which corresponds to the failure of uniqueness for the Cauchy problem.

For an admissible market model $\mathcal{M}$, we introduce the normalized reciprocal of the deflated total market capitalization \eqref{bigpi} and solve it using $L(t)$ in \eqref{ltmg}. Starting with $\mathcal{V}(0)$, it holds
\begin{equation}
\label{calvlhatt}
\begin{aligned}
   d \left( \mathcal{V}(t) L(t) \right) =  \mathcal{V}(t) L(t) \big(\Pi^{\prime}(t) \sigma(t) -  \theta'(t) \big)dW(t),
\end{aligned}
\end{equation}
where $\Pi(\cdot)$ is the corresponding portfolio of wealth $\mathcal{V}(t)$. We define $\Tilde{\theta}(\cdot) :=  \theta(\cdot) - \sigma'(\cdot) \Pi(\cdot)$, and $\widetilde{W}(\cdot) := W(\cdot) + \int_0^{\cdot} \Tilde{\theta}(t) dt$, then
\[
\Lambda(t) := \mathcal{V}(0)/L(t)\mathcal{V}(t) = \exp\bigg\{\int_0^t \Tilde{\theta}'(s) d\widetilde{W}(s) - \frac{1}{2} \int_0^t |\Tilde{\theta}(s)|^2 ds \bigg \}, \quad t \in [0,T].
\] 

We also introduce the stopping time of $\Lambda(t)$ touching zero
\[
\mathcal{T} := \inf \{t \geq 0 | \Lambda(t) = 0\} = \inf \{t \geq 0 | L(t)\mathcal{V}(t) = \infty\}.
\]

If $L(t)\mathcal{V}(t)$ is not a true martingale, then $L(T)\mathcal{V}(T)$ is not a Radon-Nikodym derivative, and the process $W(\cdot)$ is not necessarily a Brownian motion under an equivalent local martingale measure. Following the route suggested by \cite{opta} and \cite{ruf2}, there exists a probability measure $\mathbb{Q}$ on $(\Omega, \mathcal{F})$, such that $\mathbb{P}$ is locally absolutely continuous with respect to $\mathbb{Q}$: $\mathbb{P} <\!\!\!< \mathbb{Q}$, $\Lambda(T)$ is a $\mathbb{Q}$-martingale, and $d\mathbb{P} = \Lambda(T) d\mathbb{Q}$ holds on each $\mathcal{F}_T$, $T \in (0, \infty)$. Through the F\"ollmer exit measure \cite{exitm}, one can characterize the solution $u^{\ell}(\cdot)$ of the Cauchy problem in terms of the maximal $\mathbb{Q}$-probability that the supermartingale $L(T)\mathcal{V}(T)$ staying in $(0, \infty)^{2n}$ on $[0,T]$.
\begin{equation}
\label{mpc}
u^{\ell}(T) = e^{c_{\ell}}\mathbb{E}^{\mathbb{P}}[L(T)\mathcal{V}(T)]/\mathcal{V}(0) = e^{c_{\ell}} \mathbb{E}^{\mathbb{P}} \Big [ \textbf{1}_{\{\mathcal{T}>T \}} \frac{1}{\Lambda(T)} \Big] = e^{c_{\ell}} \mathbb{Q}(\mathcal{T} > T), \ \forall T \in [0,\infty).
\end{equation}

\begin{definition}[Auxiliary process and the Fichera drift]
\label{aux}
We define the following 
\begin{enumerate}
\item The auxiliary process $\zeta = (\zeta_1, \ldots, \zeta_{2n})$ is defined as 
\[
d \zeta_i(\cdot) = \widehat{b}_i (\zeta(\cdot)) dt + \sum_{k=1}^n \widehat{\sigma}_{ik} (\zeta(\cdot)) dW_k,\quad \zeta(0)=\zeta_0, \quad i =1, \ldots, 2n,
\]
where 
\[
\widehat{b}_i ( \mathbf{x}, \mathbf{y}) =
\begin{cases}
      \sum_{j=1}^n \frac{\delta  a_{ij} ( \mathbf{x}, \mathbf{y}) + (1-\delta) (s \tau')_{ij}( \mathbf{x}, \mathbf{y})}{\delta \mathbf{x} \cdot \mathbf{1} + (1-\delta) \mathbf{y} \cdot\mathbf{1}}  & \text{if i = 1, \ldots, n,}\\
      \sum_{j=1}^n  \frac{ (1-\delta) \psi_{ij} ( \mathbf{x}, \mathbf{y}) +  \delta (s \tau')_{ij}( \mathbf{x}, \mathbf{y}) }{\delta \mathbf{x} \cdot \mathbf{1} + (1-\delta) \mathbf{y} \cdot\mathbf{1}} & \text{if i = n+1, \ldots, 2n,}\\
    \end{cases}     
\]
\[\widehat{a}( \mathbf{x}, \mathbf{y}) = \widehat{\sigma}( \mathbf{x}, \mathbf{y}) \widehat{\sigma}'( \mathbf{x}, \mathbf{y}),\]
\begin{equation} 
\label{eq: sigmacombine}
\widehat{\sigma}_{ik}( \mathbf{x}, \mathbf{y}) = 
\begin{cases}
           s_{ik}( \mathbf{x}, \mathbf{y}) & \text{if} \ i,k = 1, \ldots, n,\\
       \tau_{ik}( \mathbf{x}, \mathbf{y}) & \text{if} \ i = n+1, \ldots, 2n, \ k = 1, \ldots, n,
    \end{cases}
\end{equation} 
for $( \mathbf{x}, \mathbf{y}) \in (0, \infty)^n \times (0, \infty)^n$.
 
\item The Fichera drift of $\zeta$ is defined as
\begin{equation}
\label{driftf}
    f_i(\mathbf{x}, \mathbf{y}):= \widehat{b}_i ( \mathbf{x}, \mathbf{y}) - \frac{1}{2} \sum_{j=1}^n D_j \widehat{a}_{ij} ( \mathbf{x}, \mathbf{y}),
\end{equation}
for $i = 1, \ldots, 2n$, $( \mathbf{x}, \mathbf{y}) \in (0, \infty)^n \times (0, \infty)^n$.
\end{enumerate}
\end{definition}

\begin{remark}
The auxiliary market portfolio ${\zeta_i}/{\sum_{j=1}^n \zeta_j}$ takes into account the interactions among investors. The induced measure $\mathbb Q$ corresponds to a change of drift under which the auxiliary market portfolio cannot be outperformed, since it satisfies the \textit{num\'eraire property} in the auxiliary system.

The maximal probability term is also related to the containment probability in the stochastic control literature. In \cite{pc}, an optimization problem is introduced to maximize the probability that the state trajectories remain in a bounded region over a given finite time horizon. This idea and stochastic control techniques have been used in \cite{uncertain} to consider the uncertainty of the optimal arbitrage problem (for one investor).
\end{remark}

Define $\mathcal{O}^{2n}$ as the boundary of $[0,\infty)^{2n}$. We construct the auxiliary process to be a $\left( [0,\infty)^{2n} \setminus \{\mathbf{0}\} \right)$-valued process. In the following, we show the condition for this process not to hit the limit of $[0,\infty)^{2n}$ during $[0,T]$.

\begin{asmp}
\label{uniquedist}
The system of $\zeta(\cdot)$ admits a weak, unique in distribution solution with values in $[0,\infty)^n \times [0,\infty)^n / \{\textbf{0}\}$,
{with given initial condition $\zeta(0) := (\mathbf{x}, \mathbf{y})$.}
\end{asmp}

The sufficient conditions for the above assumption are that $\widehat{b}(\cdot)$ is locally Lipschitz and Assumption~\ref{asmp1:main}, where the Lipschitz continuity can be relaxed to the local Lipschitz continuity \eqref{eq:locallip_tau} in Appendix \ref{apdx:fp}, and we can show that the covariance term in $\zeta(\cdot)$ also satisfies the growth condition.

\begin{prop}
\label{uniqpf}
Under Assumption~\ref{asmp1:main}-\ref{uniquedist}, suppose that the functions $\widehat{\sigma}_{ik}(\cdot)$ are continuously differentiable in $(0,\infty)^{2n}$ and that the matrix $\widehat{a}(\cdot)$ degenerates in $\mathcal{O}^{2n}$. {Assume that the degeneracy of $\widehat{a}(\cdot)$ and the Fichera drift of the process $\zeta(\cdot)$ can be extended by continuity in  $\mathcal{O}^{2n}_{\varepsilon^\star} := \bigcup_{i=1}^{2n} \{ z \in \mathbb{R}^{2n} : z_i \in [-\varepsilon^{\star}, 0]\}$, {for a small constant $\varepsilon^\star>0$}}.
For an investor $\ell$, if $f_i(\cdot) > 0$ holds on each face $i = 1, \ldots, 2n$ of the orthant, then $u^{\ell}(\cdot, \cdot) \equiv e^{c_{\ell}}$, and there is no arbitrage with respect to the benchmark portfolio on any time horizon.
If $f_i(\cdot) < 0$ on each face $\{x_i = 0\}$, $i = 1, \ldots, n$ and $\{y_i = 0\}$, $i = n+1, \ldots, 2n$ of the orthant, then $u^{\ell}(\cdot, \cdot) < e^{c_{\ell}}$ and arbitrage with respect to the benchmark portfolio exists, on every time horizon $[0, T]$ with $T \in (0,\infty)$.
\end{prop}

\begin{proof}
Take $\zeta_0$ in Definition~\ref{aux} as a given tuple $z := (\mathbf x, \mathbf y)$. As $\zeta(\cdot)$ is Markovian, we use a similar approach as in Theorem 2 of \cite{opta} to connect $u^{\ell}(T,  z)$ to the probability of the first hitting time of an auxiliary process to touch $\mathcal{O}^{2n}$. We denote $\mathcal{T}:= \inf \{t \geq 0 | \zeta(t) \in \mathcal{O}^{2n}\}$ as the first hitting time of the auxiliary process $\zeta(\cdot)$ to $\mathcal{O}^{2n}$. Using the nondegeneracy condition of $a_{ij}$, and \eqref{mpc}, 
we get that starting from $\zeta_0 = z$,
{
\[
u^{\ell}(T,  z) =  e^{c_{\ell}} \mathbb{Q}_{z}[\mathcal{T} > T], \quad 
(T, z) \in (0,\infty) \times (0,\infty)^{2n}.
\] 
}

{For the first claim, if the Fichera drift \eqref{driftf} satisfies $f_i(\mathbf{x}, \mathbf{y}) > 0$, for $i = 1, \ldots, n$, we only need to show that $\mathbb{Q}_z[\mathcal{T} > T] \equiv 1$, for $(T, z) \in (0,\infty) \times (0,\infty)^{2n}$. Note that $f_i(\cdot, \cdot)$ is continuous on $\mathcal{O}_{\varepsilon^\star}^{2n}$ for a small positive $\varepsilon^\star$. We take a positive constant $\varepsilon <\varepsilon^\star$ for the initial value $z$ and set $G_{R,\varepsilon}:=\{z \in \mathbb{R}^{2n}: z_i \leq -\varepsilon,\; |z|<R-\varepsilon, \text{for some \,} i = 1, \ldots, 2n\}$, where $R > |z| + \varepsilon$. Every point of $\partial G_{R,\varepsilon}$ lies strictly in the negative orthant of the coordinate plane. }

{We now construct a subset of $G_{R,\varepsilon}$ that has a connected $C^\infty$ boundary below.
Let $\rho(x)$ be the signed distance to $\partial G_{R,\varepsilon}$, that is, $\rho(x) = -\mathrm{dist}(x, \partial G_{R,\varepsilon})$ for $x \in G_{R,\varepsilon}$, and $\rho(x) = \mathrm{dist}(x, \partial G_{R,\varepsilon})$ if $x \notin G_{R,\varepsilon}$. 
Choose a standard mollifier $\eta$ as defined in \cite[Appendix C.5]{evans2010partial}. That is, $\eta(x) = C \exp \left( \frac{1}{|x|^2-1} \right)$, if $|x| < 1$; $\eta(x) = 0$ if $|x| \geq 1$. The positive constant $C$ is chosen such that $\int_{\mathbb{R}^{2n}} \eta(x) dx = 1$. 
Set $\eta_\delta(x) := \frac{1}{\delta^{2n}} \eta(\frac{x}{\delta})$. 
Define the mollification $\rho_\delta:=\rho*\eta_{\delta} \in C^{\infty}(\mathbb{R}^{2n})$. 
It is known that $\rho(\cdot)$ is 1-Lipschitz continuous, and  so \[
|\rho_\delta (x) - \rho(x)| \leq \int_{\mathbb{R}^{2n}} |\rho(x-y) - \rho(x)| \eta_\delta(y) dy \leq \int_{\mathbb{R}^{2n}} |y| \eta_\delta(y) dy \leq \delta.\]
By \cite{sard1942measure} (see also  \cite[Chapter 6]{lee2003smooth}), there exists a regular value $c (< -\delta)$ of the mollification  function $\rho_\delta$, and then we define the subset
\[\widetilde G_{R, \varepsilon}:=\{x\in \mathbb{R}^{2n}: \rho_\delta(x)<c\} \subsetneq \{x\in \mathbb{R}^{2n}: \rho(x)<0\} = G_{R, \varepsilon}.\]
with $|\nabla \rho_\delta(x)| > 0$  for $x \in \partial \widetilde G_{R, \varepsilon} =\{x\in \mathbb{R}^{2n}:\rho_\delta(x)=c\}$. 
Hence, according to the inverse function theorem \cite[Appendix C.6]{evans2010partial}, with a sufficiently large $R$, $\partial \widetilde G_{R, \varepsilon}$ is a connected $C^\infty$ boundary. }

{With $-2\delta \leq c \leq -\delta$, $|\rho_\delta - \rho| \leq \delta$, we are interested in the nonempty $\partial \widetilde G_{R, \varepsilon} \cap \mathcal O_{ \varepsilon^\star}^{2n}$ for some $\varepsilon^\star$ to be specified. As the Fichera drift $f_i(\cdot) > M > 0$ on each face of the orthant and $f_i(\cdot)$ is continuous on the compact set, 
we take a sufficiently small $\varepsilon^\star (>0) $ such that $f_i(\cdot) \geq 0$ on $\mathcal O_{ \varepsilon^\star}^{2n}$. Note that we need $[-\varepsilon - 3\delta, -\varepsilon + 3\delta] \subset [-\varepsilon^\star, 0]$ so that $\partial \widetilde G_{R, \varepsilon} \cap \mathcal O_{ \varepsilon^\star}^{2n} \neq \emptyset$. 
Because of $\varepsilon < \varepsilon^{\star}$, it holds
\[
\sum_{i=1}^n \bigg( \widehat{b}_i ( z) - \frac{1}{2} \sum_{j=1}^n D_j \widehat{a}_{ij} ( z)\bigg) \mathbf{n}_i < 0,
\]
in which $\mathbf{n}(z)$ 
is the outward normal vector at $z \in \mathcal{O}^{2n}_{\varepsilon^\star} \cap \partial \widetilde G_{R, \varepsilon}$, pointing to the exterior of $\widetilde G_{R, \varepsilon}$. 
{The process $\zeta(\cdot)$ starts from $(0, \infty)^{2n}$ and the first entrance to $\widetilde G_{R, \varepsilon}$ is through $\mathcal{O}^{2n}_{\varepsilon^\star} \cap \partial \widetilde G_{R, \varepsilon}$.}
Define the stopping time \[\tau_{R, \varepsilon} := \inf \{t \geq 0| \zeta(t) \in \widetilde G_{R, \varepsilon}\}.\]
From Theorem 9.4.1 (or Corollary 9.4.2) of \cite{fichera}, $\widetilde G$ is not attainable from the outside, $ B_R(0)/\widetilde G_{R, \varepsilon}$, that is, $\mathbb{Q}_z(\tau_{R, \varepsilon} > T) = 1$. Take a decreasing sequence $\{\varepsilon_j\}_{j = 1, 2, \ldots}$ such that $\varepsilon_j >0$, $\lim_{n \to \infty} \varepsilon_j = 0$. We have $\widetilde G_{R, \varepsilon_{j-1}} \subset \widetilde G_{R, \varepsilon_{j}}$, $\tau_R := \lim_{n \to \infty} \tau_{R, \varepsilon_j}$ is a stopping time and satisfies $\mathbb{Q}_z(\tau_{R} > T) = 1$. In other words, $\mathcal{O}^{2n} \cap \{|z| \leq R\}$ is not attainable by $\zeta(\cdot)$. Finally, letting $R \to \infty$, we get $\mathbb{Q}_z[\mathcal{T} > T] \equiv 1$, for $(T, z) \in (0,\infty) \times (0,\infty)^{2n}$.  
}

For the second claim, when $f_i(\cdot) < 0$ on each face $\{z_i = 0\}$, $i = 1, \ldots, 2n$, 
it is equivalent to showing that $\mathbb{Q}_z[\mathcal{T} > T] < 1$, for $(T, z) \in [0,\infty) \times [0,\infty)^{2n}$. We only need to show $\mathbb{Q}_z[\mathcal{T} < T] > 0$, i.e., the boundary $\{z_i = 0\}$, $i = 1, \ldots, 2n$, is attainable by $\zeta(\cdot)$. {For some constant $R > |z|$, 
$G_R := \bigcup_{i=1}^{2n} \{z \in \mathbb{R}^{2n}, z_i<0, |z|<R+\epsilon\}$. 
Consider that a subset ${\tilde{G}_R} \subsetneq G_R$ has a connected $C^3$ boundary $\partial \tilde{G}_R$ that touches $\mathcal O^{2n}$, i.e., $\partial \tilde{G}_R \cap \mathcal O^{2n} \neq \emptyset$. Pick $\epsilon > 0$ such that $\mathcal G := \overline{B_R(0) \setminus \tilde{G}_R}$ has a connected $C^3$ boundary. Denote $\mathcal{G}^o$ as the interior of ${\mathcal{G}}$. Note that $R$ is arbitrarily large; we focus on the part of the boundary close to $\mathcal{O}^{2n}$.}

By \cite[Chapter 11 Problems 7-8, Chapter 13 Definition 1]{fichera}, every point $z_0 \in \partial \mathcal{G}$ is a regular point, which means that, for every fixed $\delta > 0$, every $z_0 \in \partial \mathcal{G}$,
\begin{equation}
\label{regular}
    \lim_{\mathz \rightarrow z_0, \mathz \in \mathcal{G}^o} \mathbb{Q}_{\mathz} (\tau^g < \infty, |\zeta(\tau^g) - z_0|<\delta)=1,
\end{equation}
where $\tau^g$ is the exit time from ${\mathcal{G}}$. Define $\Sigma := \bigcup_{i=1}^{2n} \{z \in \mathbb{R}^{2n}: z_i = 0\} \bigcap {\partial \mathcal{G}}$. Thus, the Fichera drift is negative and the diffusion coefficient degenerates in $\Sigma$. The other part of the boundary is $\Sigma_2 := \Sigma^c \cap \partial \mathcal{G}$, where the diffusion coefficient is non-degenerate. {In the rest of the proof, we show that $\Sigma$, and thus $\bigcup_{i=1}^{2n}\{z_i = 0\}$,  is attainable by $\zeta(\cdot)$.}

Consider $z_0 \in \Sigma$, choose a small enough $\delta$ in \eqref{regular} such that $B_{\delta}^+(z_0) := \bigcap_{i=1}^{2n} \{z \in \mathbb{R}^{2n}: z_i > 0\} \bigcap B_{\delta}(z_0) \subset \mathcal{G}$. Then by \eqref{regular}, there exists $\eta > 0$ such that
whenever $|x-z_0| \leq \eta$, for an interior point $x \in \mathcal{G}^o,$ we have
\[
\mathbb{Q}_{x}(\tau^g < \infty, | \zeta(\tau^g) - z_0| < \delta)> \frac{1}{2}.
\]
As $B_\delta(z_0) \cap \partial \mathcal G \subset \Sigma$, we consider two cases: $ \lvert z - z_0 \rvert \le \eta$ and $\lvert z - z_0\rvert > \eta$: 
\begin{itemize}
    \item For $|z - z_0| \leq \eta$, the probability of reaching the boundary $\Sigma$ is positive, i.e.,
    \[
    \mathbb{Q}_{z}(\tau^g < \infty, \ \zeta(\tau^g) \in \Sigma)>0.
    \]
    
\item For $|z - z_0| > \eta$, 
\[
\inf_{\mathz \in A} \mathbb{Q}_{\mathz}(\tau^g<\infty, |\zeta(\tau^g) - z_0| < \delta) > \frac{1}{2},
\]
where
\[
A:= \bigcap_{i=1}^{2n} \{z \in \mathbb{R}^{2n}: z_i > 0, |z-z_0|=\eta\}.
\]
We have $\{z\} \bigcap A = \emptyset$, and we choose a subset $\mathcal{H} \subset \mathcal{G}$ that contains $\{z\} \bigcup A$. 
Define $\tau_{\star} := \inf\{t>0: \zeta(t) \notin \mathcal{H} \setminus A \}$, that is, $\tau_{\star}$ is the exit time of $\zeta(\cdot)$ from the bounded domain $\mathcal{H} \setminus A$. $\tau_{\star} < \infty$ almost surely since the diffusion coefficients are non-degenerate and uniformly elliptic in the interior of $\mathcal{G}$, and the drift coefficients are finite.
Take a continuous deterministic path $\omega_{\star}$ such that an $\epsilon$-neighborhood $N_{\epsilon, \omega_{\star}}$ of $\omega_{\star} \in C([0,T]; \mathbb{R}^{2n})$ for some $\epsilon > 0$ is contained in $ \{\omega \in \Omega: \zeta(\tau_{\star}(\omega), \omega) \in A\}$ with $\omega(0)=z$, $\omega(\tau_{\star})=r \in A$, and $\omega(s) \notin A$ for $0 \leq s < \tau_{\star}$.
That is, 
\[
N_{\epsilon, \omega_{\star}} = \{\omega \in C([0,T]; \mathbb{R}^{2n}): \ \omega(0) = z, \ \|\omega - \omega_{\star}\|<\epsilon\} \subset \{\omega \in \Omega: \zeta(\tau_{\star}(\omega), \omega) \in A\},
\]
where $\|\cdot\|$ is the supremum norm $\|\omega_1 - \omega_2\| = \sup_{0 \leq s \leq \tau_{\star}} |\omega_1(s) - \omega_2(s)|$, $\omega_1$, $\omega_2 \in C([0,T]; \mathbb{R}^{2n})$.
By the support theorem \cite[Lemma 3.1]{supthm} (see also \cite[Exercise 6.7.5]{supthm2}), 
\[
\mathbb{Q}_{z}(N_{\epsilon, \omega_{\star}})>0, \quad \text{and hence\ }\mathbb Q_z \left( \zeta(\tau_{\star}) \in A, \tau_{\star} < \infty \right) >0.
\]

Therefore, we have 
\[
\begin{aligned}
\mathbb{Q}_{z}(\zeta(\tau^g) \in \Sigma, \tau^g < \infty) & \geq \mathbb{Q}_{z} (|\zeta(\tau^g) - z_0| < \delta, \tau^g < \infty)\\
& \geq \mathbb{E}_{z} [\mathbf{1}(\zeta(\tau_{\star}) \in A, \tau_{\star}<\infty) \mathbb{E}_{z}[\mathbf{1}(|\zeta(\tau^g) - z_0| < \delta, \tau^g < \infty)| \mathcal{F}_{\tau_{\star}})]]\\
& = \mathbb{E}_{z} [ \textbf{1}(\zeta(\tau_{\star}) \in A,\ \tau_{\star}<\infty) \mathbb{Q}_{\zeta(\tau_{\star})} (|\zeta(\tau^g) - z_0| < \delta, \tau^g < \infty) ]\\
& \geq \mathbb{E}_{z} [  \textbf{1}(\zeta(\tau_{\star}) \in A, \ \tau_{\star}<\infty) \inf_{\mathz \in A}\mathbb{Q}_{\mathz} (|\zeta(\tau^g) - z_0| < \delta, \tau^g < \infty)]\\
& > \frac{1}{2} \mathbb{Q}_{z} (\zeta(\tau_{\star}) \in A, \ \tau_{\star}<\infty) > 0.
\end{aligned}
\]
The equality in the above expressions comes from the strong Markov property of $\zeta(\cdot)$.
\end{itemize}

Since $R$ is arbitrary, the process $\zeta(\cdot)$ reaches the set $\bigcup_{i=1}^{2n} \{z_i = 0\}$ with positive probability. In conclusion, $u^{\ell}(\cdot, \cdot)<e^{c_{\ell}}$ when Fichera drifts are negative, i.e., $f_i(\cdot) < 0$ on each face of $\mathcal{O}^{2n}$.
\end{proof}

As a corollary, if $f_i(\cdot) < 0$ on each face of $\mathcal{O}^{2n}$, investor $\ell$ can exploit relative arbitrage opportunities with a unique $u^{\ell}(\cdot)$, the minimal solution of \eqref{inequl}. 
The case of zero Fichera drift $f_i(\cdot) = 0$ on each face of the orthant is not treated in Proposition \ref{uniqpf} because of the non-smoothness of the boundary. It is conjectured that a similar reasoning through the Lyapunov-type function arguments used in \cite[Chapter 11, Section 6]{fichera} and \cite[Chapter 6, Section 1]{pinsky1995positive} for non-attainability may provide a proof for such a case. {In our proof of Proposition \ref{uniqpf}, we construct a connected $C^\infty$ boundary considering a relaxed region $\mathcal{O}^{2n}_{\varepsilon^\star}$ with a sufficiently small $\varepsilon^\star >0 $ so that the Fichera drift $f_i(\cdot) \geq 0$ on $\mathcal O_{ \varepsilon^\star}^{2n}$.}

In the next section, we derive the corresponding optimal strategy process $\pi^{\ell}$ for each $\ell = 1, \ldots, N$.

\section{The N player game}
\label{sec: nplayergame}
\numberwithin{equation}{section}

{In the previous sections, we focused on the relative arbitrage opportunity of an investor. We now turn to the second question posed in the Introduction (Section~\ref{intro}): If there exists a strategy to achieve optimal arbitrage opportunities, is it possible for all $N$ investors to follow it?}

{Stock capitalizations and investors' wealth processes are coupled, since the strategies adopted by the group of competitive investors contribute to the change of the trading volume of each stock, and thus to the change of stock capitalization. As we show below, this coupling leads naturally to fixed-point problems on the path space of strategies and motivates the formulation of an $N$-player game.}

We model the investors as participants in an $N$-player game and search for optimal strategies in the $N$-player game. We are interested in the existence and uniqueness of Nash equilibria, since this informs us of the incentives that investors face to change their strategies, as well as the challenges as a rational investor to predict the market. If Nash equilibrium does not exist, investors cannot find such stable relative arbitrage opportunities in a fixed amount of time. {If multiple Nash equilibria exist, then both decision-making and market prediction become more complicated. In that case, one may need to consider alternative mechanisms, such as cooperation among players, in order to recover a meaningful notion of optimality.} This will be discussed in Section~\ref{sec: future}.

The following sections are discussed under Assumptions~\ref{asmp1:main}-\ref{uniquedist}.

\subsection{Construction of Nash equilibrium} 
\label{uniquenessne}
The solution concept of this $N$ player game is the Nash equilibrium. {In particular, given the strategies chosen by the other players,} a typical player seeks the best response to all the other players, which amounts to the solution of an optimal control problem to minimize the expected cost $u^{\ell}$. Specifically, when an investor $\ell$ assumes that the wealth of other players is fixed, they wish to take the solution of \eqref{eqsol} and \eqref{eqsol2} as their initial wealth so that
\begin{equation}
\label{muexplain}
    V^{\ell}(T) \geq e^{c_{\ell}} \mathcal{V}(T) = \delta \cdot e^{c_{\ell}} X(T) + (1-\delta) \cdot e^{c_{\ell}} \frac{1}{N}\sum_{\ell=1}^N V^{\ell}(T).
\end{equation}
Thus, we define the cost functional
\[
J^\ell (\text{\boldmath$\pi$}) := \inf \bigg \{ \omega^{\ell} > 0 \  \big| \, V^{\omega^{\ell} \mathcal{V}(0), \pi^{\ell}}(T) \geq e^{c_{\ell}} \mathcal{V}(T)\bigg \},
\]
for all admissible strategy profiles $\text{\boldmath$\pi$}(\cdot) = (\pi^1(\cdot), \ldots, \pi^N(\cdot))$, $\text{\boldmath$\pi$} \in \mathbb{A}^{(N)}$. The influence of $\pi^{k}$, $k \neq \ell$, is implicitly defined in the wealth $V(\cdot)$ and the benchmark $\mathcal{V}(\cdot)$.

We now define Nash Equilibrium over the entire time horizon $[0,T]$.
\begin{definition}[Nash Equilibrium]
\label{nee}
For each investor $\ell=1, \ldots , N $, take $\pi^{\ell \star} = (\pi^{\ell \star}_1, \ldots, \pi^{\ell \star}_n) \in \mathbb A$ as the admissible strategies in Definition~\ref{portfoliopi}. The strategy profile $\text{\boldmath$\pi$}^{\star}(\cdot) = (\pi^{1 \star}(\cdot), \ldots, \pi^{N \star}(\cdot))$ is called a Nash equilibrium over $[0,T]$, if, for every $\pi^{\ell} \in \mathbb{A}$,
\begin{equation} \label{eq: NE def}
J^\ell(\pi^{\ell \star}, \pi^{-\ell \star}) \leq J^\ell(\pi^{\ell}, \pi^{-\ell \star}),\quad \ell = 1, \ldots , N \, , 
\end{equation}
and $\pi^{-\ell}$ is the subset of the strategy profile without investor $\ell$, $\pi^{-\ell}(\cdot) = (\pi^1(\cdot), \ldots, \pi^{\ell-1}(\cdot), \pi^{\ell+1}(\cdot), \ldots, \pi^N(\cdot))$. For $t\ge 0 $, we define the empirical measures of the corresponding wealth $\big( V^{\ell\star}(t)\big)_{\ell = 1, \ldots, N} \in \mathbb{R}_+^N $ of the investor $\ell$ with the Nash equilibrium strategy $\pi^{\ell\star}$, given the initial measure $\mu_0 \in \mathcal{P}_2(\mathbb{R}_+)$ by 
\begin{equation} \label{eq: NE def2}
\mu_t^\star := \frac{1}{N} \sum_{\ell=1}^N {\bm \delta}_{V^{\ell\star}(t)},
\end{equation}
where ${\bm \delta}_x$ is the Dirac delta mass at $x \in \mathbb R_+$. We write the Nash equilibrium as $(\pi^\star, \mu^\star)$, where $\text{\boldmath$\pi$}^{\star} \in C([0,T];\Delta^N)$, $\mu^{\star} \in \mathcal{P}_2(C([0,T]; \mathbb{R}_+ ))$.
\end{definition}

Each individual aims to minimize the relative amount of initial capital so that one can match or exceed the benchmark at the terminal time. From the previous section it follows that $\inf_{\pi^{\ell} \in \mathbb{A}} J^{\ell}(\pi) = u^{\ell \star}(T)$, where $u^{\ell \star}$ is the minimum nonnegative solution of \eqref{inequl}.

Investors focus more on changes in the wealth processes of other investors than changes in their strategies, because two distinct strategy processes can lead to identical wealth at the same time $T$.
{For this reason, uniqueness in distribution at the level of wealth is the more relevant notion. We therefore adopt the following notion of uniqueness of Nash equilibrium, formulated in terms of wealth processes.}
\begin{definition}[Uniqueness]
\label{def: uni-ne}
Consider two sets of Nash equilibrium strategies $\pi_a$ and $\pi_b$ such that \eqref{eq: NE def} holds. The corresponding empirical measure flows $\mu_a := (\mu^a_t)_{t \in [0,T]}$, $\mu_b:= (\mu^b_t)_{t \in [0,T]}$ are defined on the filtered probability space $(\Omega, \mathcal{F}, \mathbb{F}, \mathbb{P})$, with the same initial law $\mu_0 \in \mathcal{P}_2(\mathbb{R}_+)$. We say that the Nash equilibrium is unique if the measure flows $\mu_a$ and $\mu_b$ are indistinguishable, that is,
\[
\mathbb{P}[\mu_a = \mu_b] = 1.
\]
\end{definition}

In general, the market $\mathcal{X} \in C([0,T]; \mathbb{R}_+^n)$ and investors' wealth $\mathbf{V} \in C([0,T]; \mathbb{R}_+^N)$ interact through two mean-field interaction terms, the joint empirical measure $\nu$ of wealth and strategies in Definition~\ref{nunt} and the empirical measure of wealth. In the special case \eqref{muexplain}, the latter becomes the empirical mean of wealth. Denote the interactions $\mu$ and $\nu$ in the Nash equilibrium of games of $N$ players by $\mu^{\star}$ and $\nu^{\star}$. Generally, the uniqueness of $\mu^{\star}$ is a weaker condition than the uniqueness of $\nu^{\star}$. If there is a unique optimal $\nu^{\star}$ in the sense of Definition~\ref{nee}, then it implies that its marginal distribution $\mu^{\star}$ is the unique Nash equilibrium defined by Definition~\ref{def: uni-ne}. However, the converse is not true: A unique $\mu^{\star}$ does not necessarily give a unique optimal $\nu^{\star}$. For each $\ell$, solving the corresponding $\{\pi_i^{\ell}(t)\}_{i = 1, \ldots, n}$ of the unique $V^{\ell \star}(t)$ relies on the Malliavin calculus and the solution can be written by different stochastic processes. Thus, there could be multiple possible quantities of the optimal measure $\nu^{\star}$, and multiple solutions of the strategy profile $\pi^{\ell \star}(\cdot)$ that generated the unique $V^{\ell \star}$ or $\mu^{\star}$ for $\ell = 1, \ldots, N$. See the general setup of the joint empirical measure $\nu$ and the general particle system in Appendix~\ref{nmckean}.

\subsection{Optimal arbitrage opportunities and the corresponding strategies in \texorpdfstring{$N$}{N}-player game}
\label{42}

All $N$ players have the same goal of competing with the market and other participants to pursue relative arbitrage opportunities. Essentially, the players pursue the optimal strategy in the Nash equilibrium in order to reach the optimal initial amount of investment for each player. If the Nash equilibrium solution satisfies the condition about the Fichera drift in Proposition~\ref{uniqpf}, then there is a relative arbitrage opportunity for each investor. 

\subsubsection{Searching Nash equilibrium} 
\label{sec4.2.1}

{We next derive the Nash equilibrium and the associated optimal strategies.} Here, we assume the Markovian structure of the admissible strategies, that is, the strategy that an investor $\ell$ adopts at time $t$ is in the form of {$\pi^{\ell \star}_i(t) = \phi^{\ell}(\mathbf{x},\mathbf y)\vert_{(\mathbf x, \mathbf y) = (\mathcal X(t), \mathcal Y(t))}$.}

We specify the methodology to find the optimal path in the sense of the equilibrium of $N$ players in the following. 
Notice that we consider the $N$ player game in a dynamic programming fashion, where we solve the subproblems specified in Section~\ref{dp} for every $t \in [0,T)$. We solve the corresponding strategies and ultimately determine the optimal initial wealth to achieve relative arbitrage as defined in Definition~\ref{uTdef}.

Solving the Nash equilibrium takes the following steps:
\begin{enumerate}
    \item Suppose that we start with a given set of control processes $\text{\boldmath$\pi$}:=(\pi^1, \ldots, \pi^N) $, where $\pi^{\ell}(\cdot)$ is of the form $\phi^{\ell}(\mathcal{X}(\cdot),\mathcal{Y}(\cdot))$, $\ell = 1,\ldots, N$. 
    \begin{enumerate}
        \item Solve the $N$-particle system \eqref{eq: x}-\eqref{eq: y}, whose coefficients are determined by It\^o's formula of the function $\phi^{\ell}(\mathcal{X}(\cdot),\mathcal{Y}(\cdot))$, $\ell = 1,\ldots, N$. The detail is included in \eqref{gonc1}.
    \item Solve $u^{\ell}(T-t) := \inf_{\text{\boldmath$\pi$} \in \mathbb{A}^{(N)}} J^{\ell}(\text{\boldmath$\pi$})$ through the nonnegative minimal solution of the linear PDE similar to \eqref{eqsol}, with $u^{\ell}(0,\mathbf{x}, \mathbf{y}) = e^{c_{\ell}}$, $\ell = 1, \ldots, N$. {For each investor $\ell$, the strategies $\pi^{-\ell}$ are fixed.}
    \end{enumerate}
    \item With the solution $\{u^{\ell}(T-t)\}_{t \in [0,T], \ell = 1, \ldots, N}$, determine the corresponding optimal control $\hat{\text{\boldmath$\pi$}}$. Thus, we can find a map $\Phi$ such that $\hat{\text{\boldmath$\pi$}} = \Phi(\text{\boldmath$\pi$})$. We specify the existence of such $\Phi(\cdot)$ in Theorem~\ref{thm: neforxy}. Note that the fixed point mapping $\Phi(\cdot)$ is generally not unique, as explained in Remark~\ref{cpdfs}. 

    \item If there exists $\text{\boldmath$\pi$}^{\star}$ such that $\text{\boldmath$\pi$}^{\star} = \Phi(\text{\boldmath$\pi$}^{\star})$, then the pair $(\text{\boldmath$\pi$}^{\star}, \mu^{\star})$ is the Nash equilibrium, where  $v^{\ell \star}$ is  specified through \eqref{v0v0} and $\mu^{\star} := \frac{1}{N} \sum_{\ell=1}^N {\bm \delta}_{(V^{v^{\ell\star}, \pi^{\ell} \star} )}$.
\end{enumerate}

\subsubsection{Fixed point problem}
\label{sec4.2.2}
{We now derive an explicit representation of the optimal strategy in terms of market capitalization, trading volume, and the benchmark portfolio.}
\begin{thm}[Fixed point problem]
\label{thm: neforxy}
Fix $\delta \in (0,1]$. Under Assumption~\ref{asmp1:main}-\ref{uniquedist},
{at each time $t$, the $\ell$-th component of the best response map $\Phi(\text{\boldmath$\pi$})$ satisfies}
\begin{equation}
\label{eq:bestresponse}
\begin{aligned}
\Phi(\text{\boldmath$\pi$})_{t, \ell} & = \mathcal{X}(t) D_x \log u^{\ell, \pi}(T-t, \mathbf{x}, \mathbf{y})+ \tau( \mathbf{x}, \mathbf{y}) \sigma^{-1}(\mathbf{x}, \mathbf{y}) D_{y}  \log u^{\ell, \pi}(T-t, \mathbf{x}, \mathbf{y})\\
& \ \ +\frac{\delta X(t)}{\mathcal{V}(t)} \mathbf{m}(t) + \frac{(1-\delta)}{N \mathcal{V}(t)} \sum_{k = 1}^N V^{k, \pi}(t) \pi^{k}(t), \quad t \ge 0. 
\end{aligned}    
\end{equation}
For $i = 1, \ldots, n$, the optimal strategy follows
\begin{equation}
\label{anotherpi2}
\begin{aligned}
\pi^{\ell \star}_i(t) 
&= \left( X_i(t) D_{x_i} \log u^{\ell, \text{\boldmath$\pi$}^{\star}}(T-t, \mathbf{x}, \mathbf{y})+ \tau_i( \mathbf{x}, \mathbf{y}) \sigma^{-1}(\mathbf{x}, \mathbf{y}) D_{y_i}  \log u^{\ell, \text{\boldmath$\pi$}^{\star}}(T-t, \mathbf{x}, \mathbf{y}) + \Pi^{\star}_i(t) \right) \bigg \vert_{(\mathbf x, \mathbf y) = (\mathcal X(t), \mathcal Y(t))}.
\end{aligned}
\end{equation}
At terminal time $T$, $\pi_i^{\ell \star}(T) = \Pi^{\star}_i(T)$, for each $\ell = 1,\ldots, N$. 
We use the notation $u^{\ell, \pi}(\cdot)$ to emphasize that the coefficients of the Cauchy problem depend on $\pi$. Similarly, $V^{k, \pi}(t)$ is generated from $\pi^{k}(t)$, $k = 1, \ldots, N$. $\Pi^{\star}(t)$ is the benchmark portfolio that replicates $\mathcal{V}(t)$ in Proposition~\ref{pistarr} at the best response. When $\delta = 1$, that is, when we search for the best strategies only to outperform the market, $\pi_i^{\ell \star}(t) = \mathbf{m}^{\star}_i(t)$, for each $\ell = 1,\ldots, N$. 
\end{thm}



{
Note that in \eqref{anotherpi2}, $\tau(\cdot)$ is determined by It\^o's formula of the function $\phi(\mathbf{x},\mathbf{y})$, which yields another fixed point problem specified in Appendix~\ref{apdx:fp}. Because an explicit solution of the optimization problem is technically difficult, the PDE analysis in Section~\ref{5.1} is carried out under the time-homogeneous form of $(\tau, \gamma)(\cdot)$ and that the strategy profile of each investor depends only on $(\mathcal{X}, \mathcal{Y})$. This may hold in special cases, for instance, when the minimal solution of Cauchy PDE~\eqref{eqsol} is separable in time and space.
In the general time-inhomogeneous setting, the corresponding formulation is described in Appendix~\ref{nmckean}.}

\begin{proof}[Proof of Theorem~\ref{thm: neforxy}]
For every $\ell = 1, \ldots, N$, the player $\ell$ reacts to changes in the market (the strategies $\pi^{-\ell}$ are fixed) and adopts the best response $\Phi(\text{\boldmath$\pi$})$ that achieves 
\begin{equation}
\label{optvstarequ}
    V^{\ell \star}(\cdot) = \mathcal{V}(\cdot) u^{\ell}(T-\cdot).
\end{equation}  
The Markovian property \eqref{ggg} implies the deflated wealth process
\begin{equation}
    \label{mgVhat}
    \hat{V}^{\ell \star}(t) := V^{\ell \star}(t) L(t)
= \mathbb{E} \big[ \mathcal{V}(T) L(T)|\mathcal{F}_t \big]
\end{equation}
is a martingale. As a result, from \eqref{mgVhat}, the $dt$ terms in $d\hat{V}^{\ell}(t) = d(\mathcal V^{\ell}(t) L(t) u^{\ell}(T-t))$ must vanish; that is,
\begin{equation}
\label{nt2}
\hat{V}^{\ell \star}(t) = \hat{V}^{\ell \star}(0) + \sum_{k=1}^n \int_0^t \hat{V}^{\ell \star}(s) B_k(T-s, \mathcal{X}(s), \mathcal{Y}(s)) dW_k(s), \ 0 \leq t \leq T,    
\end{equation}
where for $\rho = T - t$, $t \in [0,T]$, 
\[
\begin{aligned}
B_k(\rho, \mathbf{x}, \mathbf{y}) 
&= \sum_{i=1}^n \sigma_{ik}(\mathbf{x}, \mathbf{y}) x_i D_i \log u^{\ell}(\rho, \mathbf{x}, \mathbf{y}) + \sum_{m=1}^n \tau_{mk}(\mathbf{x}, \mathbf{y}) D_{m} \log u^{\ell}(\rho, \mathbf{x}, \mathbf{y}) \\
&\ \ + \sum_{i=1}^n \frac{\delta X(t)}{\mathcal{V}(t)} \bigg(\frac{x_i}{\sum_{i=1}^n x_i}\sigma_{ik}(t) - \theta_k(\mathbf{x}, \mathbf{y}) \bigg) \\
& \ \ + \frac{(1-\delta)/N}{\mathcal{V}(t)} \sum_{i=1}^n \sum_{\ell = 1}^N \bigg(  V^{\ell \star}(t) \pi^{\ell}_i \sigma_{ik}(t) -  V^{\ell \star}(t)\theta_k(\mathbf{x}, \mathbf{y}) \bigg) \Big \vert_{(\mathbf x, \mathbf y) = (\mathcal X(t), \mathcal Y(t))}.
\end{aligned}
\]
Hence, the best response gives the strategy that replicates \eqref{nt2}. That is, assume that all controls $\pi^{k}(\cdot)$, $k \neq \ell$ are chosen, and player $\ell$ will choose the optimal strategy by comparing the general formula $\hat{V}^{\ell}$ in \eqref{vlhatt} and $\hat{V}^{\ell \star}$ in \eqref{nt2}. Thus, we derive that \eqref{eq:bestresponse} holds.


With a fixed set of control processes $\{\pi^{\ell}(t)\}_{0 \leq t \leq T}$, we solve $u_{T-t}^{\ell}$ for $t \in [0,T]$, and expect that the optimal strategy $\pi^{\ell \star}$ will coincide with the fixed $\pi^{\ell}(\cdot)$ for each $\ell$. Every player $k = 1, \ldots, N$ acts homogeneously, so that their wealth follows $V^{k \star}(\cdot)= \mathcal{V}(\cdot) u^{k}(T-\cdot)$. 
With an optimal strategy $\pi^{\ell \star}(\cdot)$, the wealth of player $\ell$ matches $e^{c_{\ell}} \mathcal{V}^N(\cdot)$. Hence, for a candidate portfolio vector $(\pi^1(\cdot), \ldots, \pi^N(\cdot))$ to be a Nash equilibrium, we need the $\pi^{\ell}(\cdot)$ to be identical to $\pi^{\ell \star}(\cdot)$, for $\ell = 1, \ldots, N$. That is, we solve for the fixed point problem for $\pi^{\ell \star}$ 
\begin{equation}
\label{anotherpi}
\begin{aligned}
\pi^{\ell \star}_i(t) &= X_i(t) D_i \log u^{\ell}(T-t, \mathbf{x}, \mathbf{y})+ \tau_i(\mathbf{x}, \mathbf{y}) \sigma^{-1}(\mathbf{x}, \mathbf{y}) D_{k}  \log u^{\ell}(T-t, \mathbf{x}, \mathbf{y})\\
& \ \ +\frac{\delta X(t)}{\mathcal{V}(t)} \mathbf{m}_i(t) + \frac{(1-\delta)}{N \mathcal{V}(t)} \sum_{\ell = 1}^N V^{\ell \star}(t) \pi^{\ell \star}_i(t),
\end{aligned}
\end{equation}
where $V^{\ell \star}(t)$ is generated from $\pi^{\ell \star}(t)$. Equation \eqref{anotherpi2} then follows from \eqref{anotherpi}, showing explicitly that the optimal strategy decomposes into a market component, a trading-volume component, and the benchmark portfolio.

\end{proof}

{To illustrate the fixed point problem in Theorem \ref{thm: neforxy}, we consider the following example. If the market coefficients are of the form
    \begin{equation}
    \label{eq: umn}
    (b_i, s_{ik}, \gamma_i, \tau_{ik})(\mathcal{X}(t), \mathcal{Y}(t)) = (\widetilde{b}_i, \widetilde{s}_{ik}, \widetilde{\gamma}_i, \widetilde{\tau}_{ik})(\mathbf{m}(t), \mathbf{n}(t)), \quad 1\le i,k \le n , \,\, t \ge 0 ,      
    \end{equation}
    where $\mathbf{m}(\cdot)$ is the market portfolio and $\mathbf{n}_i(\cdot) := \frac{\mathcal{Y}_i(\cdot)}{Y(\cdot)}$, $\mathbf{n}(\cdot) := (\mathbf{n}_1(\cdot), \ldots, \mathbf{n}_n(\cdot))$ for some functions $\widetilde{b}_\cdot$,$\widetilde{s}_\cdot $, $ \widetilde{\gamma}_\cdot$, $\widetilde{\tau}_\cdot$. $Y(\cdot) := \sum_{i=1}^n \mathcal{Y}_i(\cdot)$.
Then we can derive the following. 
\[
\begin{aligned}
    X_i(t) D_{x_i} \log u^{\star}(T-t, \mathbf{x}, \mathbf{z}) = \mathbf{m}_i \Big(D_{m_i} &\log \widetilde{U}(T-t, \mathbf{m}(t), \mathbf{n}(t)) \\
    &- \sum_{j=1}^n \mathbf{m}_j D_{m_j} \log \widetilde{U}(T-t, \mathbf{m}(t), \mathbf{n}(t))\Big),
\end{aligned}
\]
\[
\begin{aligned}
\mathcal Y_i(t) D_{y_i} \log u^{\star}(T-t, \mathbf{x}, \mathbf{z}) = \mathbf{n}_i \Big(D_{n_i} &\log \widetilde{U}(T-t,\mathbf{m}(t), \mathbf{n}(t)) \\
&- \sum_{j=1}^n \mathbf{n}_j D_{n_j} \log \widetilde{U}(T-t,\mathbf{m}(t), \mathbf{n}(t))\Big),
\end{aligned}
\]
in \eqref{anotherpi2}. Hence, assuming $\widetilde{U}(\cdot) \in C^{1,3,3}([0,T] \times \mathbb{R}^ n_+ \times \mathbb{R}^n_+)$, we find that the optimal strategy in \eqref{anotherpi2} satisfies $\sum_{i=1}^n \pi^{\star}_i(t) = 1$. 
Under Assumption~\ref{asmp1:main}-\ref{uniquedist}, $u^{\ell}(\tilde \tau, \mathbf{x}, \mathbf{y})\in C^{1,3,3}([0, T] \times (0, \infty)^n \times (0, \infty)^n)$ is bounded and $\Phi$ is a continuous mapping from $C([0,T]\times \mathbb R^n_+ \times \mathbb R^n_+; \Delta^N)$ to $C([0,T]\times \mathbb R^n_+ \times \mathbb R^n_+; \Delta^N)$. Consider an increasing sequence of compact subsets $[0,T]\times K^{(m)}_1 \times K^{(m)}_2$, for each compact set $K^{(m)}_1,K^{(m)}_2 \subset \mathbb{R}^n$; $K^{(m)}_j \subset K^{(m+1)}_j$ for $j = 1,2$, $\bigcup_{m=1}^\infty K_j^{(m)} = \mathbb{R}_+^n$. Assume $u^{\ell}(\cdot)$ and its first time derivative, first- and second order spatial derivatives are uniformly bounded in $\pi(\cdot)$, and assume $\tau(\cdot)$ and $\sigma(\cdot)$ are bounded. Let $\Phi^{(m)}(\text{\boldmath$\pi$})$ be the restriction of $\Phi(\text{\boldmath$\pi$})$ to $[0,T]\times K^{(m)}_1 \times K^{(m)}_2$. Then, $\Phi^{(m)}(\text{\boldmath$\pi$})$ in \eqref{eq:bestresponse} is bounded and equicontinuous on $[0,T]\times K^{(m)}_1 \times K^{(m)}_2$, $m \in \mathbb{N}$. The relatively compactness of $\Phi^{(m)}(\text{\boldmath$\pi$})$ in $C([0,T]\times K^{(m)}_1 \times K^{(m)}_2; \Delta^N)$ then follows by Arzel\`a-Ascoli Theorem. By Schauder fixed point theorem, the fixed point solution $\text{\boldmath$\pi$}^{(m)\star}$ exists such that $\text{\boldmath$\pi$}^{(m)\star} = \Phi^{(m)}(\text{\boldmath$\pi$}^{(m)\star})$. We can use Cantor's diagonal argument to find a subsequence $\text{\boldmath$\pi$}^{(m_k)\star} \to \text{\boldmath$\pi$}^{\star}$ as $m_k \to \infty$. Then $\text{\boldmath$\pi$}^{\star} = \Phi(\text{\boldmath$\pi$}^{\star})$ and $\left( \text{\boldmath$\pi$}^{\star}, \mu^{\star} := \frac{1}{N} \sum_{\ell=1}^N {V^{v^{\ell}, \text{\boldmath$\pi$}^{\ell \star}}} \right)$ is the Nash equilibrium.
$\{u^{\ell}(T, \mathbf{x}, \mathbf{y})\}_{\ell = 1, \ldots, N}$ is the value function of the $N$-player game under equilibrium. 
}

As shown in Definition~\ref{def: uni-ne} and the subsequent statements, there are many strategies that lead to the same optimal empirical wealth distribution $\mu^{\star}$. It is challenging and unnecessary to search for all the possible strategies. Theorem~\ref{thm: neforxy} provides an example of the optimal strategy. This structure of optimal strategy is similar to functional generated portfolios, as studied in \cite{thesis}, which yields practical implementations and data-driven methods for stochastic portfolio theory. To this end, we provide the following example.

\begin{example}
\label{ex: vsm}
We construct the stock capitalization coefficients using the similar idea in volatility-stabilized market models (\cite{ra2005}). The main characteristics of volatility-stabilized market models are the \textit{leverage effect}, where the rate of return and volatility have a negative correlation with the stock capitalization relative to the market $\{\mathbf{m}_i(t)\}_{i = 1, \ldots, n}$. 
Smaller stocks tend to have higher volatility than larger stocks. The coefficients $\beta(\cdot)$ and $\sigma(\cdot)$ in $\mathcal{M}$ are set to the following specific forms that agree with these market behaviors. For $1 \le i, j \le n$, with a given number of investors,
\begin{equation}
\label{eq: coeff-vsm}
    \beta_i(t) = \frac{C_x}{\mathbf{m}_i(t) \mathcal Y_i(t)} , \quad a_{ij} = \frac{X_i(t)}{\mathcal Y_i(t)} X(t)\delta_{ij},
\end{equation}
where $\delta_{ij} = 1$, when $i = j$; and $\delta_{ij} = 0$ otherwise, when $i \neq j$. $C_x$ is a given nonnegative constant. $\mathbf{m}(\cdot)$ is the market portfolio.

Let $\delta =\frac{1}{2}$, and consider a simplified market structure with 
\[d \mathcal Y_t = y_0 dt,\] 
where $\mathcal{V}_0$ is defined as $\mathcal V_0 = \frac{x_0}{2 - \frac{1}{N} \sum_{k=1}^N u^k(T)}$, and $y_0 = \frac{x_0}{2 - \frac{1}{N} \sum_{k=1}^N u^k(T)}- \frac{x_0}{2}$. Then the diffusion coefficient $\tau(\mathbf{x},\mathbf{y}) = 0$, for any $(\mathbf{x},\mathbf{y}) \in (0, \infty) \times (0, \infty)$. Hence, we adapt Assumption~\ref{hasmp} to the existence of a function $H: \mathbb{R}_+^n \rightarrow \mathbb{R}$ of class $C^2$, such that $b( \mathbf{x}, \mathbf{y}) = a( \mathbf{x}, \mathbf{y}) D_x H( \mathbf{x})$. Thus, further computation shows that \[
D_i H(\mathbf{x}) := \frac{b_i(\mathbf{x},\mathbf{y})}{a_{ii}(\mathbf{x},\mathbf{y})}= \frac{X_i(t) \beta_i(\mathbf{x},\mathbf{y})}{a_{ii}(\mathbf{x},\mathbf{y})} = \frac{C_x}{x_i},
\]
\[k(\mathbf{x},\mathbf{y}) := - \sum_{i=1}^n \sum_{j=1}^n \frac{a_{ij}( \mathbf{x}, \mathbf{y})}{2} [ D_{ij}^2 H( \mathbf{x}) + D_i H ( \mathbf{x},\mathbf{y}) D_j H( \mathbf{x})] = 0,\]
and the market price of risk follows
 $\theta_i(\mathcal{X}(t),\mathcal{Y}(t)) = s_{ii}'(\mathcal{X}(t),\mathcal{Y}(t)) D_i H(\mathcal{X}(t)) = C_x ( X(t))^{\frac{1}{2}} (X_i(t) \mathcal Y_i(t))^{-\frac{1}{2}}$, $i = 1, \ldots, n$, with $H( \mathbf{x}) = C_x \sum_{i=1}^n \log x_i$. $L(t) = \prod_{j=1}^n \frac{x_j}{X_j(t)}$, $\widehat{X}(t) = X(t)\prod_{j=1}^n \frac{x_j}{X_j(t)}.$

For $i = 1, \ldots, n$, the Fichera drift follows
\[
\begin{aligned}
    f_i(\mathbf{x}, \mathbf{y}) &= \frac{ a_{ii}(\mathbf{x}, \mathbf{y})}{\mathbf{x} \cdot \mathbf{1} + \mathbf{y} \cdot\mathbf{1}} - \frac{1}{2} D_i a_{ii} (\mathbf{x}, \mathbf{y}) \\
    &= \frac{x_i}{y_{0,i}} \frac{x_0}{\mathbf{x} \cdot \mathbf{1} + \mathbf{y} \cdot\mathbf{1}} - \frac{1}{2} \frac{x_0}{y_{0,i}} - \frac{1}{2} \frac{x_i}{y_{0,i}} \\
    &= \frac{x_i}{y_{0,i}} \left( \frac{x_0}{\mathbf{x} \cdot \mathbf{1} + \mathbf{y} \cdot\mathbf{1}} - \frac{1}{2} \right)- \frac{1}{2} \frac{x_0}{y_{0,i}} 
\end{aligned}
\]
and $f_i(\mathbf{x}, \mathbf{y}) = 0$, for $i = n+1, \ldots, 2n$. Since $ \frac{x_0}{\mathbf{x} \cdot \mathbf{1} + \mathbf{y} \cdot\mathbf{1}} - \frac{1}{2}<\frac{1}{2}$, we have $f_i(\mathbf{x}, \mathbf{y}) \leq 0$. This tells us that $\widehat{\mathcal V}(\cdot)$ and $\widehat X(\cdot)$ are strict local martingales. 
{Note that the corresponding auxiliary process $\zeta(\cdot)$ is modified to take values in $\mathbb R^n$, whose coefficients take the values of the first $n$ coordinates in the original process in Definition~\ref{aux}.}

We can now simplify \eqref{anotherpi}. 
With $u^{\cdot}(T-t, \mathbf{x}, \mathbf{y}) = u^{\cdot}(T-t)$ it holds
\begin{equation}
\label{eq: ex_optpi}
\begin{aligned}
\pi^{\ell \star}_i(t) &= 1 - (1 - \mathbf{m}_i(t))\frac{\delta X(t)}{\mathcal V_t} + X_i(t) D_i \log u^{\ell}(T-t) + \frac{1-\delta}{N \delta}\mathbf{m}_i(t) \sum_{k=1}^N V^k(t) D_i \log u^{k}(T-t) \\
& = \frac{\mathbf{m}_i(t) - \frac{(1 - \delta) X_i(t) }{N} \sum_{k=1}^N u^{k}(T-t) D_i \log u^{k}(T-t) - \frac{1-\delta}{N} \sum_{k=1}^N u^{k}(T-t)}{1- \frac{1-\delta}{N} \sum_{k=1}^N u^{k}(T-t) } + X_i(t) D_i \log u^{\ell}(T-t),
\end{aligned}
\end{equation}
where for $\ell = 1, \ldots, N$, by \eqref{ggg} and \eqref{ggg1},
\begin{equation}
\label{eq: uexample}
\begin{aligned}
u^{\ell}(T-t) = &\frac{e^{c_{\ell}} X_1(t) \cdots X_n(t)}{\mathcal{V}(t)}\E \left[\frac{\mathcal{V}(T)}{X_1(T) \cdots X_n(T)} \Big| \mathcal{F}_t \right]\\
 = &\frac{\delta e^{c_{\ell}} X_1(t) \cdots X_n(t)}{\mathcal{V}(t) (1-(1-\delta)\overline{e^c})}\E \left[\frac{X(T)}{X_1(T) \cdots X_n(T)} \Big| \mathcal{F}_t \right].
\end{aligned}
\end{equation}
As a special case, if $\delta=1$, $\tau(\mathbf{x},\mathbf{y}) = 0$, for any $(\mathbf{x},\mathbf{y}) \in (0, \infty)^n \times (0, \infty)^n$, then
\begin{equation}
\label{eq: pispecial}
    \pi^{\ell \star}_i(t) = X_i(t) D_{x_i} \log u^{\ell}(T-t, \mathbf{x}, \mathbf{y}) + \mathbf{m}_i(t).
\end{equation}

{This optimal strategy resembles the replicating strategy result in \cite[Equation (11.1)]{ra2005}. In fact, we recover the model in \cite{ra2005}, when $N = 1$ and there is no interaction term $\mathcal{Y}(\cdot)$ in the market model. In our case, $D_i \log u^{\ell}(T-t, \mathbf{x}, \mathbf{y})$ contains the interaction term $\mathbf{y}$, which influences the coefficients in the associated Cauchy problem of $u^{\ell}(T-t, \mathbf{x}, \mathbf{y})$ in \eqref{aineq}. Here, the instantaneous growth rates and volatility coefficients can be rewritten in the form of \eqref{eq: umn}, that is, functions of the market portfolio $\mathbf{m}(\cdot)$ and the relative weights of the trading volume $\mathbf{n}(\cdot) := (\mathbf{n}_1(\cdot), \ldots, \mathbf{n}_n(\cdot))$, $\mathbf{n}_i(\cdot) := \frac{\mathcal{Y}_i(\cdot)}{Y(\cdot)}$.
Then, the first term of \eqref{eq: mn} takes the form
\begin{equation}
\label{eq: mn}
\begin{aligned}
X_i(t) D_{x_i} \log u^{\ell, \text{\boldmath$\pi$}^{\star}}(T-t, \mathbf{x}, \mathbf{y}) &= \mathbf{m}_i \left(D_{m_i} \log \widetilde{U}(T-t, \mathbf{m}(t), \mathbf{n}(t)) - \sum_{j=1}^n \mathbf{m}_j D_{m_j} \log \widetilde{U}(T-t, \mathbf{m}(t), \mathbf{n}(t))\right)
\end{aligned}
\end{equation}
where $\widetilde{U}(\cdot) \in C^{1,3,3}((0, \infty) \times \Delta_n \times \Delta_n)$. Summing up \eqref{eq: pispecial} over all $i$, we get $\sum_{i=1}^n \pi^{\ell \star}_i(t) = 1$, for $t \in [0,T]$, $\ell = 1, \ldots, N$.}

If $c_{\ell} = c$, $\tau(\mathbf{x},\mathbf{y}) = 0$,  for any $(\mathbf{x},\mathbf{y}) \in (0, \infty)^n \times (0, \infty)^n$, then
\[
\pi^{\ell \star}_i(t) = \frac{X_i(t)}{1-(1-\delta)  u^{\ell}(T-t, \mathbf{x}, \mathbf{y})} D_i \log u^{\ell}(T-t, \mathbf{x}, \mathbf{y}) + \mathbf{m}_i(t).
\]
\end{example}

\begin{remark}
\label{cpdfs}
Theorem~\ref{thm: neforxy} provides the fixed-point characterization and its connection with functionally generated portfolios, but it does not yield a fully explicit closed-form solution for the optimal strategies. 
One can ensure the existence of their relative arbitrage opportunities through the Fichera drift to optimize the initial wealth in the sense of Nash equilibrium. In particular, if \eqref{anotherpi2} satisfies the condition about the Fichera drift $f_i(\cdot) < 0$ on each face of the boundary $\mathcal{O}^{2n}$ in Proposition~\ref{uniqpf} then there is a relative arbitrage opportunity for each investor. However, whether the Fichera drift condition is satisfied or not does not affect the achievement of Nash equilibrium.  

The analysis of the next section~\ref{sec: uniqueNE} is suitable not only for the particular functionally generated portfolio structure we mentioned above. Once we solve the fixed-point problem to derive the Nash equilibrium in Theorem~\ref{thm: neforxy}, we can solve the market coefficients of the given strategies through \eqref{eq:defy}. This is further explained in Appendix~\ref{apdx:fp}.
\end{remark}

\subsection{The uniqueness of Nash equilibrium}
\label{sec: uniqueNE}

Theorem~\ref{thm: neforxy} constructs a set of optimal strategies as the solution of a fixed point problem, and Example~\ref{ex: vsm} provides a case that the optimal strategy solved in Theorem~\ref{thm: neforxy} amounts to the Nash equilibrium. A natural question is the uniqueness of the Nash equilibrium in $N$-player games, since the unique Nash equilibrium leads to the unique optimal strategy for investors in the form of \eqref{anotherpi2}. To derive the uniqueness result, 
we consider the fixed point problem regarding the optimal arbitrage quantity:
With a given set of $\{(u^1, \ldots, u^{N})(T-t, \mathbf x, \mathbf y) \}_{t \in [0,T]}$, investors solve optimal strategies \eqref{anotherpi2}, and we solve the drift and diffusion coefficients of the coupled system $\mathcal{X}(t), \mathcal{Y}(t)$ through optimal strategies. Then we expect that the optimal arbitrage quantity based on this updated system coincides with the given $\{(u^1, \ldots, u^{N})(T-t, \mathbf x, \mathbf y) \}_{t \in [0,T]}$ we started with. In particular, at the Nash equilibrium, \eqref{optvstarequ} holds and 
\begin{equation}
\label{ute2}
    u^{\ell}(T-t,\mathbf{x}, \mathbf{y}) = e^{c_{\ell}} \mathbb{E}^{\mathbf{x}, \mathbf{y}} \big[ \mathcal{V}(T-t) L(T-t) \big]\, /\, \mathcal{V}(0) = \frac{\delta e^{c_{\ell}}}{\mathcal{V}(0)} \E^{\mathbf{x}, \mathbf{y}} \left[ \frac{\widehat{X}(T-t)}{1- \frac{1-\delta}{N}  U(t, \mathbf{x}, \mathbf{y}) \sum_{k=1}^N e^{c_{k}}} \right],
\end{equation}
where we define the corresponding common factor 
\begin{equation}
\label{eq: commonfactor}
    U(T-t, \mathcal{X}(t), \mathcal{Y}(t)) = u^{k}(T-t, \mathcal{X}(t), \mathcal{Y}(t))/e^{c_{k}}, \quad \text{for every } k = 1,\ldots, N.
\end{equation}
Here, $\widehat{X}(T-\cdot)$ depends on $U(\cdot)$ through \eqref{xhat}. 


%

\begin{prop}
    \label{cor: c}
    {
If $c_{\ell} \equiv c >0$ for $\ell = 1, \ldots, N$, and if the unique equilibrium $(\pi^\star, \mu^\star)$ is achieved, then the initial benchmark follows
\begin{equation}
\label{eq:samec_v}
    \mathcal{V}(0) = \frac{\delta x_0}{1-(1-\delta) u(T, \mathbf{x}, \mathbf{y})},    
\end{equation}
where $u(T, \mathbf{x}, \mathbf{y}) = e^c \E^{\mathbf{x}, \mathbf{y}}[\mathcal{V}(T)L(T)]/\mathcal{V}(0)$, for $\ell = 1, \ldots, N$.}
\end{prop}

 {By Proposition~\ref{prop: clprop}, $u^{\ell}(\cdot)$ is determined by the preference level $c_\ell$, the initial condition $(\mathbf{x}, \mathbf{y})$, and the discounted benchmark $\mathcal{V}(\cdot)L(\cdot)$.
If each investor seeks strategies according to Theorem~\ref{thm: neforxy}, then the optimal arbitrage quantity of each investor on the same time horizon is indistinguishable up to the preference level $c_\ell$. Thus, when $c_{\ell} = c$, every investor has the same optimal arbitrage objective $u(\cdot)$, and \eqref{eq:samec_v} follows from \eqref{v0v0} and \eqref{ute2}.}

To this end, we first summarize the procedure for arriving at a fixed-point solution in the space of the paths of $\{u^{\ell}(T-t, \mathcal{X}(t), \mathcal{Y}(t))\}_{t \in [0,T]}$, for every $\ell = 1,\ldots, N$ in the following chart.
\begin{figure}[ht] 
  \centering 
\begin{tikzpicture}[node distance = 4.5 cm, auto]
  \node (A) {$\{u^{\ell}(T-\cdot, \mathbf{x}, \mathbf{y})\}_{\ell=1}^N$};
  \node (B) [right of=A] {$\{\phi^{\ell}_t(\cdot)\}_{\ell=1}^N$};
  \node (C) [right of=B] {$(b,\sigma,\gamma,\tau)(\cdot) $};
  \node (D) [right of=C] {$ (\mathcal{X}, \mathcal{Y})$};
  
  \draw[->] (A) -- (B) node[midway] {Theorem~\ref{thm: neforxy}};
  \draw[->] (B) -- (C) node[midway] {\eqref{gonc1}};
  \draw[->] (C) -- (D) node[midway] {\eqref{eq: x}-\eqref{eq: y}, \eqref{eq:samec_v}};
  \draw[->,bend left=12] (D) to node[midway, fill=white] {\eqref{xhat}, \eqref{ute2}} (A);
\end{tikzpicture}
  \caption{The formulation of the fixed point problems. Note that this chart works for the fixed-point problem on the space of the paths of strategies as well, i.e., $\pi = \Phi(\pi)$ in Section~\ref{sec4.2.1}-\ref{sec4.2.2}, if we start the flow from $\{\phi^{\ell}(\mathbf{x},\mathbf{y})\}_{\ell=1}^N$.} 
  \label{fig: fixedpt} 
\end{figure}

Next, we provide the sufficient condition for the unique Nash equilibrium in the sense of Definition~\ref{def: uni-ne}. 
We consider the optimal arbitrage quantity $u(\cdot) \in C^{1,3,3}([0,T] \times \mathbb{R}^n \times \mathbb{R}^n)$ on the path space. Let $\mathcal{U} = C_b([0,T] \times \mathbb{R}^ n_+ \times \mathbb{R}^n_+; \mathbb{R})$ be the set of continuous $\mathbb R_+$-valued functions equipped with
the supremum norm 
\[\|\cdot\|_\mathcal U := \sup_{\tilde\tau \in [0,T], (\mathbf{x}, \mathbf{y}) \in \mathbb{R}^n_+\times \mathbb{R}^n_+} |u(\tilde \tau, \mathbf{x}, \mathbf{y})|\]

We make the following assumption for the uniqueness of the Nash equilibrium. 
\begin{asmp}
\label{asmp:decouplex}
    For $t \in [0,T]$, $\tilde\tau := T-t$, denote the deflated market capitalization in \eqref{xhat} as $\widehat{X}^u(\tau)$ when \eqref{eq: commonfactor} is $u(\tau, \cdot, \cdot)$. We assume that for every $u, v \in \mathcal U$, $t \in [0,T]$, it satisfies 
    \begin{equation}
    \sup_{\tilde\tau \in [0,T], \mathbf{x},\mathbf{y} \in \mathbb R^n_+ \times \mathbb R^n_+ }\E^{\mathbf{x}, \mathbf{y}} \left \lvert \widehat{X}^u(\tilde\tau) - \widehat{X}^v(\tilde\tau) \right \rvert < M \lVert u-v\rVert_{\mathcal U} 
    \end{equation}
    for some constant $M > 0$. 
\end{asmp}

One special case is when $\mathcal{X}(t)$ is of the form
\[
dX_i(t) = X_i(t)(\beta_i(\mathcal{X}(t)) dt + \sum_{k=1}^n \sigma_{ik}(\mathcal{X}(t)) dW_k(t)),\ \  i= 1, \ldots, n.
\] 
That is, the dynamics of the stock capitalization is not influenced by the trading volume of the investors. Hence, Assumption~\ref{asmp:decouplex} is satisfied, as the market is not influenced by investors, while the wealth processes of the investors are influenced by their empirical mean of the trading volume. 

\begin{thm}[Uniqueness of Nash equilibrium]
\label{thm:unique}
Under Assumption~\ref{asmp1:main}-\ref{asmp:decouplex}, consider the subproblems $u^{\ell}(T-t)$ at every starting time $t \in [0,T]$. For an arbitrary $t \in [0,T)$, take $u, v \in (0,1)$ as the different values of the initial relative arbitrage quantity of investor $\ell$, as defined in \eqref{ggg}. Nash equilibrium $(\text{\boldmath$\pi$}^{\star}, \mu^{\star})$ is unique when 

\begin{equation}
\label{eq: condforn}
\frac{1-\delta^2}{\delta} \overline{e^c}  \in (0,1),
 \quad M < x_0 \frac{\delta + \overline{e^{c}}(\delta^2 -1)}{1-(1-\delta) \overline{e^{c}}},
\end{equation}
$M$ is the constant in Assumption~\ref{asmp:decouplex}.
\end{thm}

\begin{proof}
\begin{enumerate}
\item  We first formulate the fixed-point problem of $U(\cdot)$.  
Define $f : \mathcal{U} 
\rightarrow \mathcal{U}$ as following:
\begin{equation}
\label{fu}
f(U) = \left[1- \frac{1-\delta}{N} \cdot U \cdot \sum_{k=1}^N e^{c_{k}}\right]^{-1}.
\end{equation}
Then, it holds,
\begin{equation}
\label{eq: d}
\begin{aligned}
    \sup_{\tilde\tau \in [0,T], (\mathbf{x}, \mathbf{y}) \in \mathbb{R}^n_+\times \mathbb{R}^n_+} f(U)(\tilde\tau,\mathbf{x}, \mathbf{y}) &= \left[1- \frac{1-\delta}{N} \sup_{\tilde\tau \in [0,T], (\mathbf{x}, \mathbf{y}) \in \mathbb{R}^n_+\times \mathbb{R}^n_+} U(\tilde\tau,\mathbf{x}, \mathbf{y}) \sum_{k=1}^N e^{c_{k}}\right]^{-1}\\
    & = \left[1- \frac{1-\delta}{N} \sum_{k=1}^N e^{c_{k}}\right]^{-1} =: C_f,
\end{aligned}
\end{equation}

Define an operator $\mathcal{F}: \mathcal{U} \rightarrow \mathcal{U}$ by
\begin{equation}
 \label{ufixed0}
\left[ \mathcal{F} * U \right](T-t,\mathbf{x}, \mathbf{y})= \mathcal{V}^{-1}(0)\E^{\mathbf{x}, \mathbf{y}} \left[\frac{\widehat{X}(T-t)}{1- \frac{1-\delta}{N}  U(t,\mathbf{x}, \mathbf{y}) \sum_{k=1}^N e^{c_{k}}} \right] = \mathcal{V}^{-1}(0)\E^{\mathbf{x}, \mathbf{y}} \left[ \widehat{X}(T-t) f (U)(t,\mathbf{x}, \mathbf{y}) \right],
\end{equation}
for every $t \in [0,T]$, where $U := \left(U_t\right)_{t \in [0,T]} \in \mathcal{U}$, $f$ is defined in \eqref{fu}. Consider a mapping $\mathcal{I}: \mathcal{U} \rightarrow \mathcal{U}$, such that $\mathcal I(U)(t, \mathbf{x}, \mathbf{y}) = U(T-t, \mathbf{x}, \mathbf{y})$, for every $t \in [0,T]$. In particular, $\mathcal{I}(\cdot)$ maps the optimal arbitrage quantity for the time horizon $[0,T-t]$ to the optimal arbitrage quantity for the time horizon $[0, t]$. Note that this is different from the subproblems in \eqref{utobj1}, since the time horizon there is $[t,T]$, for every $t \in [0,T]$. We have
\[
\mathcal I(\E^{\mathbf{x}, \mathbf{y}}[\widehat{X}])(T-t,\mathbf{x}, \mathbf{y}) = \E^{\mathbf{x}, \mathbf{y}}[\widehat{X}(t)].
\]
This mapping $\mathcal{I}$ is continuous and bounded in $\mathcal{U}$. The boundedness is immediately followed by the bounded nature of $U(\cdot)$. Let $U_m(\cdot)$ be a sequence of functions in $\mathcal{U}$ that converges uniformly to a function $U \in \mathcal{U}$ on $[0,T] \times \mathbb{R}_+ \times \mathbb{R}_+$ as $m \to \infty$. Then for each $t \in [0,T]$, $\mathcal I$ is continuous since
    \[\lim_{m \to \infty} \mathcal I(U_m)(t) = \lim_{m \to \infty} U_m(T-t) = U(T-t) = \mathcal I(U)(t)\]
    Thus, at Nash equilibrium, \eqref{ute2} leads to the fixed-point condition
\begin{equation}
\label{ufixed}
\left[ \mathcal{F} * \mathcal{I}(U) \right](T-t,\mathbf{x}, \mathbf{y}) = U(T-t,\mathbf{x}, \mathbf{y})
\end{equation}
for every $t \in [0, T]$, $(\mathbf{x}, \mathbf{y}) \in \mathbb{R}_+ \times \mathbb{R}_+$. 
\item Denote $\widehat{M} :=  M C_f \left(x_0 + \frac{1-\delta}{\delta} \widebar{v}_0\right)^{-1} = \frac{M}{\delta x_0}$, $\lambda := \sup_{\mathbf{x}, \mathbf{y}}  \E^{\mathbf{x}, \mathbf{y}} \left[\widehat{X}(T) \right]/\mathcal{V}(0)$, $L := \sup_{u \in [0,1]} |f'(u)| = \frac{1-\delta}{\left(1-(1-\delta) \overline{e^{c}}\right)^2} \overline{e^{c}}$. By the triangle inequality, for $u, v \in \mathcal{U}$,
\[
\begin{aligned}
 \|\mathcal{F} * \mathcal{I}(u)  - \mathcal{F} * \mathcal{I}(v) \|_{\mathcal U} & = \sup_{\tilde\tau \in [0,T], (\mathbf{x}, \mathbf{y}) \in \mathbb{R}^n_+\times \mathbb{R}^n_+} |\mathcal{F} * \mathcal I(u(\tilde\tau,\mathbf{x}, \mathbf{y})) - \mathcal{F} * \mathcal{I}(v(\tilde\tau,\mathbf{x}, \mathbf{y}))| \\
 &\leq \sup_{\tilde\tau, \mathbf{x}, \mathbf{y}} \E^{\mathbf{x}, \mathbf{y}} \left[\widehat{X}^u(\tilde\tau) \left | f(u) - f(v)\right | (t,\mathbf{x}, \mathbf{y})\right] / \mathcal{V}(0) + \widehat{M} \sup_{\tilde\tau, \mathbf{x}, \mathbf{y}} \E \left |\widehat{X}^u(\tilde\tau) - \widehat{X}^v(\tilde\tau) \right|\\
& \leq L \sup_{\tilde\tau, \mathbf{x}, \mathbf{y}} \E^{\mathbf{x}, \mathbf{y}} \left[\widehat{X}^u(\tilde\tau) |u - v|(\tilde\tau,\mathbf{x}, \mathbf{y}) \right] / \mathcal{V}(0) + \widehat{M} \|u - v\|_{\mathcal U}\\
& \leq \left( \lambda L + \widehat{M} \right) \|u - v\|_{\mathcal U},
\end{aligned}
\]
where the second inequality is derived from the local Lipschitz continuity of $f$.  
Since $\E \left[\widehat{X}(t) \right] < x_0$ from the supermartingale property proved in Proposition~\ref{f1}, we have
\[
\lambda := \sup_{\tilde\tau, \mathbf{x}, \mathbf{y}}  \E^{\mathbf{x}, \mathbf{y}} \left[\widehat{X}(\tilde\tau) \right]/\mathcal{V}(0) = \sup_{\tilde\tau, \mathbf{x}, \mathbf{y}}  \frac{\E^{\mathbf{x}, \mathbf{y}} \left[\widehat{X}(\tilde\tau) \right]}{X(0)} \frac{X(0)}{\mathcal{V}(0)} < \frac{X(0)}{\mathcal{V}(0)} = \frac{1-(1-\delta)\overline{e^{c}}}{\delta }.
\]


Combining these quantities, we get
\begin{equation}
\label{lambdal}
\begin{aligned}
    \lambda L & \leq  \frac{1}{ \delta } \frac{(1-\delta) \overline{e^{c}}}{1-(1-\delta)\overline{e^{c}}}. \\
\end{aligned}    
\end{equation}
\item To show that $ \mathcal{F} * \mathcal{I}$ is a contraction, it suffices to verify that $\lambda L + \widehat{M} < 1$. This is equivalent to
\[
 \overline{e^{c}} (1-\delta)< 1 , 
\]
and
\[
\frac{1}{\delta} \frac{(1-\delta) \overline{e^{c}}}{1-(1-\delta)\overline{e^{c}}} + \frac{M}{\delta x_0}< 1.
\]

Further computation yields that \eqref{eq: condforn} is a sufficient condition that every participant achieves the unique Nash equilibrium as derived in Proposition~\ref{prop: clprop}. 
Therefore, for all the functions $ u, v \in \mathcal{U}$, we conclude the contraction condition that there exists a $ 0 \leq k < 1 $ such that 
\[
\|\mathcal{F} * \mathcal{I}(u)  - \mathcal{F} * \mathcal{I}(v) \|_{\mathcal U}  \leq k \|u - v\|_{\mathcal U}.
\]
Then by the Banach Fixed Point Theorem and Assumption~\ref{asmp: geqn}, the solution $u(\cdot)  \in C^{1,3,3}([0,T] \times \mathbb{R}^ n_+ \times \mathbb{R}^n_+)$ in \eqref{ufixed} is unique. 
The optimal wealth processes are thus uniquely determined by 
\begin{equation} 
\label{vfix}
   V^{\ell\star}(t) 
= \frac{ u^{\ell}(T-t, \mathcal X (t), \mathcal Y(t)) \delta X(t)}{1-(1-\delta)\frac{1}{N}\sum_{\ell=1}^N  u^{\ell}(T-t, \mathcal X (t), \mathcal Y(t))},
\end{equation}
which indicates that the Nash equilibrium is unique in the sense of Definition~\ref{def: uni-ne}.

\end{enumerate}
\end{proof}

\begin{remark}
Under the uniqueness conditions for Nash equilibrium and the Fichera drift condition on the market coefficients, investors can outperform their benchmark and achieve optimal arbitrage at the unique Nash equilibrium. Consider Example~\ref{ex: vsm}, with a sufficiently small $T$ and a suitable value for $C_x$ in \eqref{eq: coeff-vsm}, we can satisfy the bound on $M$ in \eqref{eq: condforn}.
\end{remark}
\begin{remark}
We use \eqref{eq: ex_optpi} to show a counterexample for the two-fund separation theorem (\cite{tobin1958liquidity}). That is, the optimal strategy of the form \eqref{anotherpi} for $\delta \in (\delta_-, \delta_+)$ need not be a linear combination of the optimal strategy in the form \eqref{anotherpi} for $\delta_-$ and $\delta_+$. We know that a simultaneous relative arbitrage opportunity is not possible when $\delta =0$. When $\delta$ approaches zero, the conditions for achieving a unique Nash equilibrium are violated, adding uncertainties and difficulties to solve an optimal strategy for competitive investors.
\end{remark}

\section{Discussions and Future work}
\label{sec: future}

{In this paper, we discuss the optimization and game-theoretic analysis of the relative arbitrage problem with a finite number of competitive investors. In contrast to the standard single-player construction, where the market evolves along a given stochastic process, here investors need to consider how the market and other investors react to their trading strategies. To this end, we summarize several financial interpretations from this paper. First, when the benchmark used by every player is the average wealth, the relative arbitrage opportunity vanishes. To see this clearly, consider a simplified case when every investor wants to outperform exactly the average wealth, then the benchmark is also the market portfolio. So as opposed to single investor case who tries to outperform the market when the market is not influence by the decision of the investors, here, the market coincides with the average wealth; consequently, the equilibrium is arrived when all investors adopts this same average wealth strategy. Second, we see the connection between the existence of relative arbitrage opportunities and the Fichera drift condition. One promising direction for future work is to derive specific characterizations of Proposition 3.4 to identify the conditions for arbitrage opportunities. Third, we derive the optimal feedback strategies under Nash equilibrium. This provides us with a comprehensive understanding of the microstructure of the market, the existence of relative arbitrage opportunities, and interesting connections with other important topics in SPT such as functional generated portfolios and volatility stabilized market model.}

{This paper leads to a few future topics related to the formulation of the market, different notions of equilibrium, and short-term arbitrage opportunities. The structure of the market dynamics and the information structure of strategies lead to different characterizations of optimal arbitrage quantity and Nash equilibrium.
We may consider investigating the supply-demand mechanism and considering model uncertainty to relax the structural assumptions imposed in the paper. Furthermore, investors are considered fully competitive and symmetric up to the preference level $c_{\ell}$ in this paper. Extending this discussion to cooperative behavior among some investors and optimization with the knowledge of exogenous noise traders can be promising: How do we formulate a control problem here? What is the suitable notion of optimization for relative arbitrage opportunities? In addition, it is of practical use to investigate the numerical solution of the Cauchy problem, the differential games, and the corresponding functionally generated portfolio.
The numerical scheme for the single investor case is studied in \cite{thesis, yang2024finding} through the time-changed Bessel bridges, which is closely related to the capitalization of the stocks.  }

{Beyond SPT-specific applications and open-questions, the game-theoretic framework discussed in this paper leads to several interesting future directions in multi-agent optimization. In particular, the regime (Section 4.2) of searching for Nash equilibrium and the construction of different fixed point problems based on the specific notion of uniqueness for Nash equilibrium can be used for multi-agent problems in many applications, especially when the interaction function in the states and in the objective can differ, with common noise presented.
Given that the mean-field game framework \cite{yang2023relative} provides a good approximation for finite-player games, one promising direction is to extend the numerical scheme for the $N$-player game we consider in this paper when $N$ is large. With the mean-field approximation error and the numerical error from fixed point problem further quantified, one may consider an environment with model uncertainty \cite{uncertain, yang2026pathwise}, or data-driven approaches via generative modeling \cite{liu2022deep, deng2024reflected} or reinforcement learning \cite{guo2019learning} to relax the structural setup of the market systems in the paper.}


\newpage 

\begin{appendices}

\section{Market dynamics and conditions}
\label{defs}
\numberwithin{equation}{section}
This section recalls some properties of the market that are related to the existence of relative arbitrage.
\begin{definition}[Non-degeneracy and bounded variance]
\label{nd}
A market is a family $\mathcal{M}= \{X_1, \ldots, X_n\}$ of the $n$ stocks, each of which is defined as in \eqref{eq: x}, such that the matrix $\alpha(t)$ is nonsingular for every $t \in [0,\infty)$, a.s. The market $\mathcal{M}$ is called  \textit{nondegenerate} if there exists a number $\epsilon>0$ such that for $x\in \mathbb{R}^n$
\begin{equation*}
    \mathbb{P}(x \alpha(t) x' \geq \epsilon \|x\|^2, \forall t \in [0,\infty))=1, 
\end{equation*}
The market $\mathcal{M}$ has \textit{bounded variance} from above, if there exists a number $M>0$ such that for $x\in \mathbb{R}^n$
\begin{equation*}
    \mathbb{P}(x \alpha(t) x' \leq M \|x\|^2, \forall t \in [0,\infty))=1,
\end{equation*}
\end{definition}
We restate the non-degeneracy in a specific form in Assumption~\ref{asmp1:main} in order to show the existence of relative arbitrage.

\section{Proofs}
\label{racp}
\numberwithin{equation}{section}

\begin{proof}[Proof of Theorem~\ref{thm31}]

We first show some main steps of computing \eqref{aineq}. Plugging \eqref{ggg} in the above equations set and using the Markovian property of $g(\cdot)$ gives
\[
\begin{aligned}
 \frac{\partial u^{\ell}(t,  \mathbf{x}, \mathbf{y})}{\partial t} g( \mathbf{x}, \mathbf{y}) &= \mathcal{L}(u^{\ell}(t,  \mathbf{x}, \mathbf{y}) g( \mathbf{x}, \mathbf{y})) - \big( k( \mathbf{x}, \mathbf{y}) + \Tilde{k}( \mathbf{x}, \mathbf{y})\big) u^{\ell}(t,  \mathbf{x}, \mathbf{y}) g( \mathbf{x}, \mathbf{y}).\\
\end{aligned}
\] 
By expanding the above, it follows
\[
\begin{aligned}
     \frac{\partial u^{\ell}(t)}{\partial t} &= \frac{1}{2} \sum_{i,j=1}^n a_{ij}(\mathbf{x},\mathbf{y}) \bigg( D_{ij}^2 u^{\ell}(t) + 2 D_i u^{\ell}(t) \frac{D_j g(\mathbf{x},\mathbf{y})}{g(\mathbf{x},\mathbf{y})} 
     +  u^{\ell}(t) \frac{D_{ij} g(\mathbf{x},\mathbf{y})}{g(\mathbf{x},\mathbf{y})}\bigg)\\
     & + 2 \sum_{i,j=1}^n a_{ij}(\mathbf{x},\mathbf{y}) \bigg(D_{i} u^{\ell}(t) +  u^{\ell}(t) \frac{D_i g(\mathbf{x},\mathbf{y})}{g(\mathbf{x},\mathbf{y})} \bigg) D_j H(\mathbf{x},\mathbf{y})\\
     & + \frac{1}{2} \sum_{i,j=1}^n a_{ij} [ D_{ij}^2 H(\mathbf{x},\mathbf{y}) + 3 D_i H (\mathbf{x},\mathbf{y}) D_j H(\mathbf{x},\mathbf{y})] u^{\ell}(t)\\
     & + \frac{1}{2} \sum_{p,q=1}^n \psi_{pq}(\mathbf{x},\mathbf{y}) \bigg(D_{pq}^2 u^{\ell}(t) + 2 D_p u^{\ell}(t) \frac{D_q g(\mathbf{x},\mathbf{y})}{g(\mathbf{x},\mathbf{y})} 
     +  u^{\ell}(t) \frac{D_{pq} g(\mathbf{x},\mathbf{y})}{g(\mathbf{x},\mathbf{y})}\bigg)\\
     & + 2 \sum_{p,q=1}^n \psi_{pq}(\mathbf{x},\mathbf{y}) \bigg(D_{p} u^{\ell}(t) +  u^{\ell}(t) \frac{D_p g(\mathbf{x},\mathbf{y})}{g(\mathbf{x},\mathbf{y})} \bigg) D_q H(\mathbf{x},\mathbf{y})\\
     & + \frac{1}{2} \sum_{p,q=1}^n \psi_{pq} [ D_{pq}^2 I(\mathbf{y}) + 3 D_p H (\mathbf{x},\mathbf{y}) D_q H(\mathbf{x},\mathbf{y})]u^{\ell}(t)\\
     &+ \sum_{i,p=1}^n (s \tau^T)_{ip}(\mathbf{x}, \mathbf{y}) \bigg( D_{ip}^2 u^{\ell}(t) + D_i u^{\ell}(t) \frac{D_p g(\mathbf{x},\mathbf{y})}{g(\mathbf{x},\mathbf{y})} + D_p u^{\ell}(t) \frac{D_i g(\mathbf{x},\mathbf{y})}{g(\mathbf{x},\mathbf{y})} 
 +  u^{\ell}(t) \frac{D_{ip} g(\mathbf{x},\mathbf{y})}{g(\mathbf{x},\mathbf{y})}\bigg)\\
     & - \sum_{i,p=1}^n (s \tau^T)_{ip}(\mathbf{x}, \mathbf{y}) D_i H (\mathbf{x},\mathbf{y}) D_p H(\mathbf{x},\mathbf{y}) u^{\ell}(t).
\end{aligned}
\]
We can simplify this equation with the following computations.

By \eqref{calv}, and the definition of $g(\cdot)$,
\[
\frac{D_i g(\mathbf{x}, \mathbf{y})}{g(\mathbf{x}, \mathbf{y})} = -D_i H(\mathbf{x}, \mathbf{y}) + \frac{\delta}{\delta \mathbf{x} \cdot \mathbf{1} + (1-\delta) \mathbf{y} \cdot\mathbf{1}},\quad \frac{D_{p} g(\mathbf{x}, \mathbf{y})}{g( \mathbf{x}, \mathbf{y})} = -D_p H(\mathbf{x}, \mathbf{y})+ \frac{1-\delta}{\delta \mathbf{x} \cdot \mathbf{1} + (1-\delta) \mathbf{y} \cdot\mathbf{1}}.\] 
Th second order derivative with respect to $\mathbf{x}$ is
\[
\begin{aligned}
\frac{D_{ij} g(\mathbf{x}, \mathbf{y})}{g(\mathbf{x}, \mathbf{y})} &= - \frac{\delta (D_i H(\mathbf{x}, \mathbf{y})+D_j H(\mathbf{x}, \mathbf{y}))}{\delta \mathbf{x} \cdot \mathbf{1} + (1-\delta) \mathbf{y} \cdot\mathbf{1}}  - D_{ij}^2 H(\mathbf{x}, \mathbf{y}) + D_i H (\mathbf{x}, \mathbf{y}) D_j H(\mathbf{x}, \mathbf{y}),
\end{aligned}
\]
and the counterparts of second order derivative $\frac{D_{pq} g(\mathbf{x}, \mathbf{y})}{g(\mathbf{x}, \mathbf{y})}$ and $\frac{D_{ip} g(\mathbf{x}, \mathbf{y})}{g(\mathbf{x}, \mathbf{y})}$ can be derived in the same vein. As a result, when the drift term $\gamma(\cdot)$ and volatility term $\tau(\cdot)$ in \eqref{eq: y} is given, \eqref{eqsol} - \eqref{aineq} are satisfied. 

Suppose that a solution of \eqref{inequl} and \eqref{eqsol2} is $\Tilde{w}^{\ell}: C^{2}((0, \infty) \times (0, \infty)^n \times (0, \infty)^n) \rightarrow (0, \infty)$, $\Tilde{w}^{\ell}(0) = e^{c_{\ell}}$. Define $\Tilde{N}(t) :=  \Tilde{w}^{\ell}(T-t, \mathcal{X}_{t},  \mathcal{Y}_{t}) \mathcal{V}(t) L(t)$, $0 \leq t \leq T$. We solve 
\[
\frac{d \Tilde{N}(t)}{\Tilde{N}(t)} = \frac{d \Tilde{w}^{\ell}(T-t, \mathcal{X}_{t},  \mathcal{Y}_{t})}{\Tilde{w}^{\ell}(T-t, \mathcal{X}_{t},  \mathcal{Y}_{t})} + \big(\Pi^{\prime}(t) \sigma(t) -  \theta'(t) \big) \Big (1+ \frac{d \Tilde{w}^{\ell}(T-t, \mathcal{X}_{t},  \mathcal{Y}_{t})}{\Tilde{w}^{\ell}(T-t, \mathcal{X}_{t},  \mathcal{Y}_{t})} \Big )dW(t)
\]
by using 
the inequality \eqref{inequl}. We get that the $dt$ terms in $d \Tilde{N}(t)/\Tilde{N}(t)$ is always no greater than 0. $\Tilde{N}(t)$ is a positive local supermartingale. Thus, $\Tilde{N}(t)$ is a supermartingale.

Hence, $\Tilde{N}(0) = \Tilde{w}^{\ell}(T, \mathbf{x},  \mathbf{y}) \mathcal{V}(0) \geq \mathbb{E}^{\mathbb{P}}[\Tilde{N}(t)] = \mathbb{E}^{\mathbb{P}}[e^{c_{\ell}} \mathcal{V}(T) L(T)]$ holds for every $(T, \mathbf{x},  \mathbf{y}) \in (0,\infty) \times (0,\infty)^n \times (0,\infty)^n$. Then $\Tilde{w}^{\ell}(T, \mathbf{x},  \mathbf{y}) \geq \mathbb{E}^{\mathbb{P}} \big[ e^{c_{\ell}} \mathcal{V}(T) L(T) \big]/ \mathcal{V}(0) = u^{\ell}(T, \mathbf{x},  \mathbf{y})$.
\end{proof}

\begin{proof}[Proof of Proposition~\ref{f1}]
From It\^o's formula, the discounted process $\widehat{V}^{\ell}(\cdot)$ admits
\[
   d \widehat{V}^{\ell}(t) =  \widehat{V}^{\ell}(t)\big(\pi^{\ell\prime}(t) \sigma(t) -  \theta'(t) \big)dW(t); \quad  \widehat{V}^{\ell}(0) = \widehat{v}_{\ell}.
\]
Thus, $\widehat{V}^{\ell}(\cdot)$ is a positive local martingale with  $\mathbb{E}[L(T)] \leq 1$.
In particular, $\widehat{V}^{\ell}(\cdot)$ is a supermartingale, and we get that for an arbitrary $\omega^{\ell}$ in \eqref{utobj},
\[
\omega^{\ell} \mathcal{V}(0) = \widehat{v}_{\ell} \geq \mathbb{E} \big[\widehat{V}^{\ell}(T) \big] \geq \mathbb{E} \bigg[\widehat{X}(T)\delta e^{c_{\ell}} + L(T) (1-\delta) e^{c_{\ell}} \overline{V}(T)  \bigg] := p^{\ell},
\]
{where $\overline{V}(T) = \sum_{k\neq \ell} V^k(T) + V^\ell(T)$.}
Hence, $u^{\ell}(T, \mathbf x, \mathbf y) \geq \frac{p^{\ell}}{\mathcal{V}(0)}$, $\ell = 1, \ldots, N$.

To prove the opposite direction $u^{\ell}(T, \mathbf x, \mathbf y) \leq \frac{p^{\ell}}{\mathcal{V}(0)}$, we use the martingale representation theorem (Theorem 4.3.4, \cite{sde}) to find
\begin{equation}
U^{\ell}(t) := \mathbb{E}\big[ e^{c_{\ell}} \mathcal{V}(T) L(T)|\mathcal{F}_t \big]= e^{c_{\ell}}\int_0^t \Tilde{p}'(s) dW(s) + p^{\ell}, \quad 0 \leq t \leq T,
\end{equation} 
where $\Tilde{p}: [0,T] \times \Omega \rightarrow \mathbb{R}^n$ is ${\mathbb {F}}$-progressively measurable and almost surely square-integrable.
Next, construct a wealth process $V_*^{\ell}(\cdot) = U^{\ell}(\cdot)/L(\cdot)$, it satisfies $V_*^{\ell}(0) = p^{\ell}$, $V_*^{\ell}(T) = e^{c_{\ell}} \mathcal{V}(T)$. Use the trading strategy $h_*^{\ell}(\cdot)$ in \eqref{vlhatt}, where
\[
h_*^{\ell}(\cdot)  = \frac{1}{L(\cdot)V_*^{\ell}(\cdot)} \alpha^{-1}(\cdot) \sigma(\cdot) [ \Tilde{p}(\cdot) + U^{\ell}(\cdot) \theta(\cdot)].
\]
It follows that $h_*^{\ell} \in \mathbb{A}$ replicates $V_*^{\ell}(0)$, i.e., $V^{h_*^{\ell}}(T) = e^{c_{\ell}} \mathcal{V}(T)$ a.s., with $V^{h_*^{\ell}}(0) = p^{\ell}$. Consequently, from
\[
 \frac{p^{\ell}}{\mathcal{V}(0)} \in \big \{ \omega > 0 | \text{ there exists } \pi^{\ell} \in \mathbb{A}, \text{given} \  \pi^{-\ell}(\cdot) \in \mathbb{A}^{N-1}, \  \text{such that } V^{\omega \mathcal{V}(0), \pi^{\ell}} \geq e^{c_{\ell}} \mathcal{V}(T) \big \} 
 \]
it follows that $p^{\ell}/\mathcal{V}(0) \geq u^{\ell}(T)$. We therefore conclude $u^{\ell}(T) =  \mathbb{E} \big[ e^{c_{\ell}} \mathcal{V}(T) L(T) \big]\, /\, \mathcal{V}(0)$, for $\ell = 1, \ldots, N$.
\end{proof}


\section{General finite dynamical system}
\label{nmckean}

We use interacting particle models to describe the market. For fixed $N$, we model the $N$ investors as $N$ particles, where the particles have a common source of noise $W$ defined on $(\Omega, \mathcal{F},\mathbb{P})$. For any metric space $(\mathbb{X}, d)$, $\mathcal{P}(\mathbb{X})$ denotes the space of probability measures on $\mathbb{X}$ endowed with the weak convergence topology. $\mathcal{P}_p(\mathbb{X})$ is the subspace of $\mathcal{P}(\mathbb{X})$ of the probability measures of order $p$, that is, ${\bm \mu} \in \mathcal{P}_p(\mathbb{X})$ if $\int_{\mathbb{X}} d(x, x_0)^p {\bm \mu}(dx) < \infty$, where $x_0 \in \mathbb{X}$ is an arbitrary reference point. For $p \geq 1$, ${\bm \mu}$, ${\bm \nu}$ $\in \mathcal{P}_p(\mathbb{X})$, the $p$-Wasserstein metric on $P_p(\mathbb{X})$ is defined by
\[
W_p({\bm \nu}_1, {\bm \nu}_2)^p := \inf_{{\bm \pi} \in \Pi({\bm \nu}_1, {\bm \nu}_2)} \int_{\mathbb{X} \times \mathbb{X}} d(x,y)^p {\bm \pi}(dx,dy),
\]
where $d$ is the underlying metric on the space. $\Pi({\bm \nu}_1, {\bm \nu}_2)$ is the set of Borel probability measures ${\bm \pi}$ on $\mathbb{X} \times \mathbb{X}$ with the first marginal ${\bm \nu}_1$ and the second marginal ${\bm \nu}_2$. Specifically, ${\bm \pi}(A \times \mathbb{X}) = {\bm \nu}_1(A)$ and ${\bm \pi} (\mathbb{X} \times A) = {\bm \nu}_2(A)$ for every Borel subset $A \subset \mathbb{X}$. 

Let $C([0,T]; \mathbb{R}^{d_0})$ be the space of continuous functions from $[0,T]$ to $\mathbb{R}^{d_0}$. In this paper, we often take $\mathbb{X} = \mathbb{R}^{d_0}$ when considering a real-valued random variable or take $\mathbb{X}$ as the path space $\mathbb{X} = C([0,T]; \mathbb{R}^{d_0})$ with metric $d$ as the supremum norm for a process, where a fixed number $d_0$ will be specified later. 
$\mathcal{P}_p(\mathbb{R}^{d_0})$ equipped with the Wasserstein distance $\mathcal{W}_p$ is a complete separable metric space, since $\mathbb{R}^{d_0}$ is complete and separable. See \cite{villani2021topics} for more details on the Wasserstein metric and its properties.

We first define the empirical measure in the finite-particle system that we use in this paper.
\begin{definition}[Empirical measure in the finite $N$-particle system]
\label{nunt}
Consider $\mathcal{F}$-measurable $ C([0,T]; \mathbb{R}_+) \times C([0,T]; \mathbb{R}^n_+)$-valued random variables $(V^{\ell}, \pi^{\ell})$ for every investor $\ell = 1, \ldots, N$. We define the empirical measure of $(V^{\ell}, \pi^{\ell})$ as $\nu \in \mathcal{P}_2(C([0,T]; \mathbb{R}_+) \times C([0,T]; \mathbb{R}_+^n)) \cong \mathcal{P}_2(C([0,T]; \mathbb{R}_+ \times \mathbb{R}_+^n))$ , whose time-$t$ marginal is
\begin{equation}
\nu_t := \frac{1}{N} \sum_{\ell=1}^N {\bm \delta}_{(V^{\ell}(t), \pi^{\ell}(t))}, \quad  t \ge 0, 
\end{equation}
where ${\bm \delta}_x$ is the Dirac delta mass at $x \in \mathbb{R}_+ \times \mathbb{R}_+^n$. Thus, for any Borel set $A \subset \mathbb R_{+} \times \mathbb R_+^n$,
\begin{equation}
\nu_t(A) = \frac{1}{N} \sum_{\ell=1}^N {\bm \delta}_{(V^{\ell}(t), \pi^{\ell}(t))} (A)  = \frac{1}{N} \cdot \# \{\ell \leq N: (V^{\ell}(t), \pi^{\ell}(t)) \in A\} , 
\end{equation}
where $\# \{\cdot\}$ represents the cardinality of the set. In particular, the weighted average vector $\mathcal Y$ defined in \eqref{eq:defy} is given by $\mathcal Y(t) = \int_{\mathbb{R}_+ \times \mathbb R_+^n} x y \, \nu_t (dx \times d y) $, $t \ge 0 $, where $x$ represents wealth $V^{\ell}(t)$ and $y$ represents the strategies defined in the admissible set $\pi^{\ell}(t) \in \mathbb R_+^n$.
\end{definition}
 
Denote $\mathcal{X}(t) = (X_1(t), \ldots, X_n(t))$, $\mathbf{V}_t = (V^1(t), \ldots, V^{N}(t))$ for $t \ge 0$. For a fixed $N$, with $\nu^N_t$ in Definition~\ref{nunt} that generalizes $\mathcal{Y}(t)$, we can generalize the $(n+N)$-dimensional system as
\begin{equation}
\label{nparticle}
 d{X}_{i}(t) = {X}_{i}(t) \beta_{i}(t, \mathcal{X}(t), \nu_t)dt + \sum_{k=1}^{n}{X}_{i}(t) \sigma_{ik}(t, \mathcal{X}(t), \nu_t)dW_{k}(t); \quad \mathcal{X}(0) = \mathbf{x}_0
\end{equation}
for $i=1, \ldots , n$, and for $\ell = 1, \ldots , N$, 
\begin{equation}
\label{nv}
    dV^{\ell}(t) = V^{\ell}(t) \big( \sum_{i=1}^{n}\pi^{\ell}_i(t) \beta_{i}(t, X(t), \nu_t)dt + \sum_{i=1}^{n} \sum_{k=1}^{n} \pi^{\ell}_i(t) \sigma_{ik}(t, X(t), \nu_t)dW_{k}(t) \big); \quad V^{\ell}(0) = v^{\ell}.
\end{equation}

On the filtered probability space $(\Omega, \mathcal{F},\mathbb{P})$, we call
\[
(\mathcal{X}, \mathbf{V}, \nu, W) \in (C([0,T]; \mathbb{R}^n_+), C([0,T]; \mathbb{R}^N_+), \mathcal{P}_2(C([0,T]; \mathbb{R}_+ \times \mathbb{R}_+^n)), C([0,T]; \mathbb{R}^n))
\]
 {a strong solution} of the conditional McKean-Vlasov system \eqref{nparticle}-\eqref{nv} with respect to the filtration generated by the fixed $n$-dimensional Brownian motion $W(\cdot)$, with initial condition $(\mathcal{X}_0, \mathbf{V}_0, \nu_0)$, 
if $(X_t)_{t \in [0,T]}$ has continuous sample paths, satisfies $\mathbb P$-almost surely
\begin{equation}\label{sol}
    X_i(t) = X_i(0) + \int_0^t X_i(s)\beta_i(s, \mathcal X(s), \nu_s) \, {\rm d} s + \sum_{k=1}^n \int_0^t X_i(s) \sigma_{ik}(s, \mathcal X(s), \nu_s) \, {\rm d} W_k(t) , \qquad t \in [0,T]
\end{equation}
and is adapted to the smallest complete filtration $\mathbb F=(\mathcal F_t)_{t \geq 0}$ in which $X_0$ is $\mathcal{F}_0$-measurable and $W$ is $\mathbb F$-adapted. The system \eqref{eq: x} and \eqref{eq: y} is a special case of the above.

We make the following assumptions to ensure that the system \eqref{nparticle}-\eqref{nv} is well-posed. In the following, $|\cdot|$ denotes the Euclidean norm of vector $\mathbb{R}^d$ and the Frobenius norm of matrix $\mathbb{R}^{d\times n}$, $d =n$ or $N$ in particular. Also, let us define 
\begin{equation}\label{eq:b-and-s}
b_i(t, x, \nu) := x_i \beta_i(t, x, \nu), \quad s_{ik}(t, x, \nu) := x_i \sigma_{ik}(t, x, \nu); \quad 1 \le i \le n, t \ge 0 , x \in \mathbb R^{n}, \nu \in \mathcal P^{2}. 
\end{equation} 
{As detailed in the below assumption, the strategy is of a closed-loop feedback form. We consider it to be a function of the wealth processes. In Sections 4-5, we construct it as a function of the capitalization and the trading volume. If participants adopt open-loop strategies, the interaction with other players and the market is much limited, as their wealth depends on their own open loop control and a small part of the trading volume. This condition is desirable to impose for trading strategies, as the strategies will not be too volatile in the face of small changes in wealth. Furthermore, the Lipschitz continuity of market coefficients and strategy functions is common in the literature on mathematical finance and stochastic games, for example, in \cite[Lemma 3.3]{mfgbook}.}

\begin{asmp}
\label{asmp6:main}
\begin{enumerate}[label=\alph*.]
\item \label{asmp1alipfn2}
Assume the Lipschitz continuity and linear growth condition are satisfied with Borel measurable mappings $b_{i}(t, x, \nu)$, $s_{ik}(t, x, \nu)$ in \eqref{eq:b-and-s} from $[0,T] \times \mathbb{R}^n_+ \times \mathcal{P}_2(\mathbb{R}_+ \times \mathbb{R}_+^n)$ to $\mathbb{R}^n$. That is, there exists a constant $C_1, C_2 \in (0, \infty)$ that is independent of $t \in [0,T]$, such that
\begin{equation}
\label{bslip2}
    |b(t, x, \nu) - b(t, \widetilde{x}, \widetilde{\nu})| + |s(t, x, \nu) - s(t, \widetilde{x}, \widetilde{\nu})| \leq C_1[|x - \widetilde{x}| + \mathcal{W}_2(\nu, \widetilde{\nu})],
\end{equation}
\[
|x \beta(t, x, \nu)| + |x \sigma(t, x, \nu)| \leq C_2(1+|x|+M_2(\nu)),
\]
and
$a_{ij} (\cdot)$ satisfy the nondegeneracy condition, i.e., if there exists a number $\epsilon>0$ such that 
\[
    a_{ij} ( x, \nu) \geq \epsilon (|x|^2 + M_2^2(\nu)), ( \mathbf{x}, \mathbf{y}) \in \mathbb{R}_+^n \times \mathbb{R}_+^n.
\]
where 
\[
M_2(\nu) = \bigg(\int_{C([0,T]; \mathbb{R}_+ \times \mathbb{R}_+^n)} |x|^2 d \nu(x) \bigg)^{1/2}; \quad \nu \in \mathcal P_{2}(\mathbb R_{+} \times \mathbb R_+^n). 
\]

\item \label{growasmp}
Assume the following Lipschitz continuity and boundedness, i.e., there exist constants $C_2, B \in (0, \infty)$ such that 
\[
|v^{} \beta(t, x, \nu) - \widetilde{v}^{} \beta(t, \widetilde{x}, \widetilde{\nu})| + |v^{} \sigma(t, x, \nu) - \widetilde{v}^{} \sigma(t, \widetilde{x}, \widetilde{\nu})| \leq C_2[|x - \widetilde{x}| + n|v^{}-\widetilde{v}^{}| + \mathcal{W}_2(\nu, \widetilde{\nu})],
\]
\[
|v^{} \beta(t, x, \nu)| + |v^{} \sigma(t, x, \nu)| \leq B,
\]
for every $v, \widetilde{v} \in \mathbb{R}_+$, 
$t \in [0, T] $; $x, \widetilde{x} \in \mathbb R_{+}^n$; $\nu, \widetilde{\nu} \in \mathcal P_{2}( \mathbb{R}_+ \times \mathbb{R}_+^n)$. 
\item 
\label{asmp2:pipi}
Let $n$ be fixed and $\ell = 1, \ldots, N$. We assume the strategies adopted by investors are closed loop feedback controls of the wealth processes and are Lipschitz continuous in their variables, i.e., there exists a bounded mapping {$\phi^{\ell}: \mathbb{R}_+^N \to \mathbb{R}^n$} such that $\pi^{\ell}(t) = \phi^{\ell}(\mathbf{V}_t)$, $|\phi^{\ell}(\cdot)| < M$, and
\[
|\phi^{\ell}(\mathbf{v}) - \phi^{\ell}(\widetilde{\mathbf{v}})| \leq n C_3|\mathbf{v} - \widetilde{\mathbf{v}}| 
\]
for every $\mathbf{v}, \widetilde{\mathbf{v}} \in \mathbb R^{N}_{+}$. 
\end{enumerate}
\end{asmp}

\begin{thm}
\label{eufinite}
Assume that the stock capitalization vector $\mathbf{x}_0$ at time $0$ has a finite second moment, that is, $\E \lvert \mathbf{x}_0 \rvert^2 < \infty$, and is independent of the Brownian motion $W (\cdot)$.
Under Assumptions~\ref{asmp6:main}, the $(n+N)$-dimensional SDE system \eqref{nparticle}-\eqref{nv} admits a unique strong solution for any given number of stocks$n$, and any given number of investors $N$.
\end{thm}

\begin{proof}
We restrict the discussion to the time-homogeneous case, whereas the inhomogeneous case can be proved in the same fashion. Rewrite the system as a $(n+N)$-dimensional SDE system:
\begin{equation}
\label{xveqset}
\begin{aligned}
d \begin{pmatrix} 
   \mathcal{X}_t  \\
   \mathbf{V}_t  
   \end{pmatrix} 
   &:= f(\mathcal{X}(t),\mathbf{V}(t), \nu_t) dt + g(\mathcal{X}(t),\mathbf{V}(t), \nu_t)  dW_t, 
\end{aligned}
\end{equation}
where $f(\mathcal{X}(t),\mathbf{V}(t), \nu_t) :=  (f_1(\cdot), \ldots, f_{n+N}(\cdot))$, $f_i(\cdot)=X_i(t) \beta_i(\cdot)$ for $i=1,\ldots,n$, $f_j(\cdot):=V^{j-n}_t \pi^{j-n}_t \beta(\cdot)$ for $j=n+1,\ldots,n+N$; Similarly, $g(\mathcal{X}(t),\mathbf{V}(t), \nu_t) := (g_1(\cdot), \ldots, g_{n+N}(\cdot))$, $g_i(\cdot) := X_i(t) \sigma_i(\cdot)$ for $i=1,\ldots,n$ and  $g_j(\cdot) := V^{j-n}_\cdot \pi^{j-n}_\cdot \sigma(\mathcal{X}_\cdot,\nu_\cdot)$ for $j=n+1,\ldots,n+N$. 

Let us consider a closed-loop strategy $\pi^{\ell}_t = \phi^{\ell}(\mathbf{V}(t))$. Define a mapping $L_N : \mathbb{R}^{N}_+  \rightarrow \mathcal{P}_2(C([0,T]; \mathbb{R}_+ \times \mathbb{R}_+^n))$ 
\[
L_N(\mathbf{V}(t)) = \frac{1}{N} \sum_{\ell=1}^N {\bm \delta}_{(V^{\ell}(t), \phi^{\ell}(\mathbf{V}(t)))} = \nu_t.
\]
and define $F : \mathbb{R}_+^{N+n} \rightarrow \mathbb{R}^{N+n}$, $G : \mathbb{R}_+^{N+n} \rightarrow \mathbb{R}^{N+n} \times \mathbb{R}^{n}$, with
\[
F(\mathcal{X}(t),\mathbf{V}(t)) = f(\mathcal{X}(t),\mathbf{V}(t), L_N(\mathbf{V}(t))); \quad G(\mathcal{X}(t),\mathbf{V}(t)) = g(\mathcal{X}(t),\mathbf{V}(t), L_N(\mathbf{V}(t))).
\]
Write $(x,v) = (x_1, \ldots, x_n, v^1, \ldots, v^N)$ and $(y,u) = (y_1, \ldots, y_n, u^1, \ldots, u^N)$ for two pairs of random values of $(\mathcal{X}(\cdot),\mathbf{V}(\cdot))$. Denote the empirical measure $\Tilde{\pi}$ induced by the joint distribution of the random variables $u$ and $v$, i.e., 
\[
\Tilde{\pi} = \frac{1}{N} \sum_{\ell=1}^N {\bm \delta}_{(u^{\ell},v^{\ell})} . 
\] 
It is a coupling of the functions $L_N(v)$ and $L_N(u)$. From the definition of the Wasserstein distance, we have  
\begin{equation}
\label{w2ineq}
\begin{aligned}
 \mathcal{W}_2^2(L_N(v),L_N(u)) &\leq \int_{\mathbb{R}^{N} \times \mathbb{R}^{N}} |(v,\phi(v))-(u,\phi(u))|^2 \Tilde{\pi}(dv, du)\\
 &\leq \frac{1}{N} \sum_{\ell=1}^N |(v^{\ell},\phi^{\ell}(v))-(u^{\ell},\phi^{\ell}(u))|^2\\
 &\leq \Big (\frac{1}{N}+n^2C_3^2 \Big)|v-u|^2.
\end{aligned}
 \end{equation}


For every $(x,v)$ and $(y,u)$ in $\mathbb{R}_+^n \times \mathbb{R}^N_+$, and for every $\ell$, by \eqref{w2ineq}, the Cauchy-Schwartz inequality and the Lipschitz condition of $\beta_{i}$ and $\phi^{\ell}$, we have 
\[
\begin{aligned}
|F(x, v) - F(y, u)|^2 
 \le & \sum_{i=1}^n |b_i(x, L_N(v)) - b_i(y, L_N(u))|^2 + \sum_{\ell=1}^N |v^{\ell} \phi^{\ell}(v) \beta(x, L_N(v)) - u^{\ell} \phi^{\ell}(u) \beta(y, L_N(u))|^2 \\
 \leq & \,  2C_1^2 [ |x-y|^2 + \mathcal{W}_2^2(L_N(v),L_N(u))] \\
 &+ 6 M^2 C_2^2 [N|x-y|^2 + n^2|v-u|^2+ N \mathcal{W}_2^2(L_N(v),L_N(u))] + 2N n^2B^2 C_3^2 |v-u|^2\\ 
  \leq & \,  2C_1^2 [ |x-y|^2 + (\frac{1}{N}+n^2C_3^2)|v-u|^2] \\
 &+ 6N M^2 C_2^2 |x-y|^2 + 6 C_2^2 (n^2+n^2NC_3^2 + 1)|v-u|^2+ 2N n^2 B^2 C_3^2 |v-u|^2\\ 
\leq & \,   L_m^2  |(x,v)-(y,u)|^2,
\end{aligned}
\]
where $L_m^2 = \max\{ 2C_1^2 + 6N M^2 C_2^2, 2C_1^2(\frac{1}{N}+n^2C_3^2) + 6 M^2 C^2_2(n^2+1+n^2C_3^2N) + 2n^2B^2C_3^2N\}$. The second inequality follows from the triangle inequality, the uniform boundedness and the Lipschitz condition of $\beta_{i}$ and $\phi^{\ell}$, 
\[
\begin{aligned}
 &|v^{\ell} \phi^{\ell}(v) \beta(x, L_N(v)) - u^{\ell} \phi^{\ell}(u) \beta(y, L_N(u))|^2\\
=& \, |\phi^{\ell}(v) [v^{\ell} \beta(x, L_N(v)) - u^{\ell} \beta(y, L_N(u))] + [\phi^{\ell}(v) -  \phi^{\ell}(u)] u^{\ell} \beta(y, L_N(u))|^2\\
 \leq & \, 2 |v^{\ell} \beta(x, L_N(v)) - u^{\ell} \beta(y, L_N(u))|^2 + 2B^2 |\phi^{\ell}(v) -  \phi^{\ell}(u)|^2  \\
 \leq & \, 6 C_2^2 [|x-y|^2 + n^2|v^{\ell}-u^{\ell}|^2+ \mathcal{W}_2^2(L_N(v),L_N(u)] + 2 n^2 B^2 C_3^2 |v-u|^2. 
\end{aligned}
\]


Thus, we get the Lipschitz continuity of $F(\cdot)$. In the same vein, we conclude the Lipschitz continuity of $G(\cdot)$. Thus, according to the existence and uniqueness conditions of McKean-Vlasov dynamics in \cite{lecturemfg}, the system \eqref{nparticle}-\eqref{nv} is well-defined.
\end{proof}

\section{Derivations related to fixed point problem}
\label{apdx:fp}

We use this section to clarify the fixed-point mapping in Figure~\ref{fig: fixedpt}. {This also specifies the fixed point of coefficients $(\gamma, \tau)$, which is needed to check their Lipschitz conditions in assumption~\ref{asmp1:main}.}

To generalize Assumption~\ref{asmp1:main}, we instead assume that the coefficients $\gamma(\cdot)$ and $\tau(\cdot)$ of the trading volume processes $\mathcal{Y}_i(t)$ are local Lipschitz continuous with respect to $(\mathbf{x}, \mathbf{y})$ uniformly with respect to time $t$ on every compact interval, i.e., for any $T, M > 0$, there exists $k > 0$ such that for any $\mathbf{x},  \mathbf{x}', \mathbf{y}, \mathbf{y}'  \in \mathbb R^n, \quad $
\begin{equation}
\label{eq:locallip_tau}
    \sup_{t \in [0,T]} \left( |\gamma(t, \mathbf{x}, \mathbf{y}) - \gamma(t, \mathbf{x}', \mathbf{y}')| + |\tau(t, \mathbf{x}, \mathbf{y}) - \tau(t, \mathbf{x}', \mathbf{y}')| \right) \leq k \left(|\mathbf{x}-\mathbf{x}'| + |\mathbf{y}-\mathbf{y}'|\right), \quad |\mathbf{x}|,  |\mathbf{x}'|, |\mathbf{y}|, |\mathbf{y}'|<M.
\end{equation}
We can ensure the well-posedness of the market system of $(\mathcal{X}(\cdot), \mathcal{Y}(\cdot))$ by \cite[Theorem 12.3, Lemma 12.4]{russo2022stochastic}. The regularity assumption of $u^{\ell}(\cdot)$ in Theorem 4.1 indicates that $\phi(\cdot) \in C^{1,2,2}([0,T] \times (0,\infty)^n \times (0,\infty)^n)$, where {$\phi^{\ell}(t, \mathbf{x}, \mathbf{y}): [0,T] \times \mathbb{R}_+^n \times \mathbb{R}_+^n \rightarrow \mathbb{R}^n$}, $\ell = 1, \ldots, N$. In addition, $u(\cdot)$ is generalized to the solution of a Cauchy problem with time-inhomogeneous coefficients.
\[
d \mathcal Y_i(t) = \frac{1}{N} \sum_{\ell=1}^N d \big[ V^{\ell}(t) \phi^{\ell}_i(\mathcal{X}(t), \mathcal{Y}(t)) \big] = \gamma_i(t, \mathcal{X}(t), \mathcal{Y}(t)) dt + \sum_{k=1}^n \tau_{ik}(t, \mathcal{X}(t), \mathcal{Y}(t)) dW_k(t),
\]
where the coefficients $\gamma(\cdot)$ and $\tau(\cdot)$ can be determined by It\^o's formula on $\phi^{\ell}(\mathbf{x}, \mathbf{y}): \mathbb{R}_+^n \times \mathbb{R}_+^n \rightarrow \mathbb{R}^n$, i.e.,
\begin{equation}
\label{gonc}
\begin{aligned}
 & d \big[V^{\ell}(t) \phi^{\ell}_i(t, \mathbf{x}, \mathbf{y}) \big] \\
= & V^{\ell}(t) \Bigg( \phi_i^{\ell}(t, \mathbf{x}, \mathbf{y}) \phi^{\ell}(t, \mathbf{x}, \mathbf{y}) \beta(\mathbf{x}, \mathbf{y}) + D_t \phi_i^{\ell}(t, \mathbf{x}, \mathbf{y}) + \frac{1}{2} \sum_{p,q = 1}^{2n} \widehat{a}_{pq}(\mathbf{x}, \mathbf{y}) \partial^2_{pq} \phi_i^{\ell}(t, \mathbf{x}, \mathbf{y})\\
&  \hspace{1cm} + \sum_{p=1}^{n} \beta_p(\mathbf{x}, \mathbf{y}) D_{p} \phi_i^{\ell}(t, \mathbf{x}, \mathbf{y}) + \sum_{p=n+1}^{2n} \gamma_{p-n}(t, \mathbf{x}, \mathbf{y}) D_{p} \phi_i^{\ell}(t, \mathbf{x}, \mathbf{y}) \Bigg) dt\\
& + V^{\ell}(t) \left( \phi_i^{\ell}(t, \mathbf{x}, \mathbf{y}) \phi^{\ell}(t, \mathbf{x}, \mathbf{y}) \sigma(t) + \sum_{p=1}^{2n} \widehat{\sigma}_p(\mathbf{x}, \mathbf{y}) D_{p} \phi_i^{\ell}(t, \mathbf{x}, \mathbf{y}) \right) dW_t \Bigg |_{(\mathbf{x}, \mathbf{y}) := (\mathcal{X}(t), \mathcal{Y}(t))},
\end{aligned}    
\end{equation}
or for the time homogeneous case,
\begin{equation}
\label{gonc1}
\begin{aligned}
 & d \big[V^{\ell}(t) \phi^{\ell}_i(\mathbf{x}, \mathbf{y}) \big] \\
= & V^{\ell}(t) \Bigg( \phi_i^{\ell}(\mathbf{x}, \mathbf{y}) \phi^{\ell}(\mathbf{x}, \mathbf{y}) \beta(\mathbf{x}, \mathbf{y}) + \frac{1}{2} \sum_{p,q = 1}^{2n} \widehat{a}_{pq}(\mathbf{x}, \mathbf{y}) \partial^2_{pq} \phi_i^{\ell}(\mathbf{x}, \mathbf{y})\\
&  \hspace{1cm} + \sum_{p=1}^{n} \beta_p(\mathbf{x}, \mathbf{y}) D_{p} \phi_i^{\ell}(\mathbf{x}, \mathbf{y}) + \sum_{p=n+1}^{2n} \gamma_{p-n}(\mathbf{x}, \mathbf{y}) D_{p} \phi_i^{\ell}(\mathbf{x}, \mathbf{y}) \Bigg) dt\\
& + V^{\ell}(t) \left( \phi_i^{\ell}(\mathbf{x}, \mathbf{y}) \phi^{\ell}(\mathbf{x}, \mathbf{y}) \sigma(t) + \sum_{p=1}^{2n} \widehat{\sigma}_p(\mathbf{x}, \mathbf{y}) D_{p} \phi_i^{\ell}(\mathbf{x}, \mathbf{y}) \right) dW_t \Bigg |_{(\mathbf{x}, \mathbf{y}) := (\mathcal{X}(t), \mathcal{Y}(t))},
\end{aligned}    
\end{equation}
where $\widehat{\sigma}(\cdot)$ and $\widehat{a}(\cdot)$ are defined in \eqref{eq: sigmacombine}. $D_{p} \phi_i^{\ell}(t, \mathbf{x}, \mathbf{y}) := \frac{\partial \phi_i^{\ell}(t, \mathbf{x}, \mathbf{y})}{\partial x_p}$, for $p =1, \ldots, n$; $D_{p} \phi_i^{\ell}(t, \mathbf{x}, \mathbf{y}) := \frac{\partial \phi_i^{\ell}(t, \mathbf{x}, \mathbf{y})}{\partial y_{p-n}}$, for $p =n+1, \ldots, 2n$. Thus, equating the drift (diffusion, respectively) terms to $\gamma(\cdot)$ (to $\tau(\cdot)$, respectively), we get the fixed point condition of $\phi(t, \mathbf{x}, \mathbf{y})$, and we can specify the drift and diffusion coefficients $\tau(\cdot)$ and $\gamma(\cdot)$ of the fixed point solution $\phi(\cdot)$.

\end{appendices}

\end{document}